\def\BibTeX{{\rm B\kern-.05em{\sc i\kern-.025em b}\kern-.08em
    T\kern-.1667em\lower.7ex\hbox{E}\kern-.125emX}}
\let\color@begingroup\relax
   \let\color@endgroup\relax}{}%
\def\fix@ieeecolor@hbox#1{%
  \hbox{\color@begingroup#1\color@endgroup}}
\patchcmd\@makecaption{\hbox}{\fix@ieeecolor@hbox}{}{\FAILED}
\patchcmd\@makecaption{\hbox}{\fix@ieeecolor@hbox}{}{\FAILED}
\newtheorem{theorem}{Theorem} 
\newtheorem{assumption}{Assumption} 
\newtheorem{lemma}{Lemma}            
\newtheorem{problem}{Problem}
\newtheorem{definition}{Definition}
\newtheorem{proposition}{Proposition}
\newtheorem{remark}{Remark}
\DeclareMathOperator*{\argmax}{arg\,max}
\let\oldIEEEkeywords\IEEEkeywords
\def\IEEEkeywords{\oldIEEEkeywords\normalfont\bfseries\ignorespaces}
\newcommand{\argsup}{\mathrm{arg}\sup}
\def\R{\mathbb{R}}
\newcommand{\pr}{\ensuremath{\mathrm{Pr}}}
\newcommand{\states}{\mathcal{S}}
\newcommand{\leadacts}{\mathcal{A}}
\newcommand{\followacts}{\mathcal{B}}
\newcommand{\feat}{f}
\newcommand{\featM}{\mathbf{M}}
\newcommand{\projM}{\mathbf{M}}
\newcommand{\samp}{\sim}
\newcommand{\E}{\mathop{\mathbb{E}}}
\newcommand{\estV}{\widetilde V}
\newcommand{\actualV}{V}
\newcommand{\traceV}{V}
\newcommand{\mistake}{\mu}
\newcommand{\Ttrue}{T_{\text{true}}}
\newcommand{\opnorm}[1]{\ensuremath{||{#1}}||_{op}}
\newcommand{\spn}{\ensuremath{\text{span}}}
\newcommand{\stcomp}[1]{{#1}^{\perp}}  
\def\x{{\mathbf x}}
\def\thetaV{{\boldsymbol{\theta}}}
\def\0{{\mathbf{0}}}
\def\v{{\mathbf v}}
\def\r{{\mathbf r}}
\def\A{{\mathbf A}}
\def\M{{\mathbf M}}
\def\R{{\mathbb{R}}}
\begin{document}
\title{No-Regret Learning in \\Dynamic Stackelberg Games}
\author{Niklas Lauffer,  Mahsa Ghasemi, Abolfazl Hashemi, Yagiz Savas, and Ufuk Topcu
\thanks{This work was supported in part by AFRL grant FA9550-19-1-0169 and DARPA grant D19AP00004.}
\thanks{Niklas Lauffer is with the Department of Electrical Engineering and Computer Sciences at the University of California, Berkeley, Berkeley, CA 94720, USA. Mahsa Ghasemi and Abolfazl Hashemi are with The Elmore Family School of Electrical and Computer Engineering, Purdue University, West Lafayette, IN 47907, USA. Yagiz Savas and Ufuk Topcu are with the Department of Aerospace Engineering and Engineering Mechanics, The University of Texas at Austin, Austin, TX 78712, USA.}
}

\maketitle

\begin{abstract}
In a \textit{Stackelberg game}, a \textit{leader} commits to a randomized strategy, and a \textit{follower} chooses their best strategy in response.
We consider an extension of a standard Stackelberg game, called a \textit{discrete-time dynamic Stackelberg game}, that has an underlying state space that affects the leader's rewards and available strategies and evolves in a Markovian manner depending on both the leader and follower's selected strategies.
Although standard Stackelberg games have been utilized to improve scheduling in security domains, their deployment is often limited by requiring complete information of the follower's utility function.
In contrast, we consider scenarios where the follower's utility function is \textit{unknown} to the leader; however, it can be linearly parameterized.
Our objective then is to provide an algorithm that prescribes a randomized strategy to the leader at each step of the game based on observations of how the follower responded in previous steps. 
We design a \textit{no-regret} learning algorithm that, with high probability, achieves a regret bound (when compared to the best policy in hindsight) which is sublinear in the number of time steps; the degree of sublinearity depends on the number of features representing the follower's utility function. The regret of the proposed learning algorithm is independent of the size of the state space and polynomial in the rest of the parameters of the game.
We show that the proposed learning algorithm outperforms existing model-free reinforcement learning approaches.
\end{abstract}


\section{Introduction} \label{sec:introduction}
Stackelberg games model strategic interactions between two agents, a leader and a follower \cite{stackelberg1934market,fudenberg1998theory}. The leader plays first by committing to a randomized strategy. The follower observes the leader's commitment and then plays a strategy to best respond to the leader's chosen strategy.

Stackelberg games have been successfully applied to a multitude of scenarios to model competition between firms \cite{stackelberg1934market}, improve security scheduling \cite{paruchuri2008playing}, and dictate resource allocation \cite{sinha2018stackelberg}. The Los Angeles Airport and other security agencies have deployed randomized patrol routes based on Stackelberg models \cite{tambe2011}. Researchers have also used Stackelberg models to improve national park wildlife ranger patrol patterns and resource distribution to protect against illegal poaching \cite{yang2014}.


\emph{Repeated Stackelberg games} \cite{learning_security_games}  are used to model agents that repeatedly interact in a Stackelberg game.
A significant drawback of standard repeated Stackelberg game formulations is their inability to model dynamic scenarios. In scenarios involving repeated interaction between two strategic agents, decisions made early on can sometimes have long-lasting effects. 
Moreover, repeated Stackelberg formulations cannot model dynamic changes in the leader's preferences or available strategies over time. 

Consider a scenario in which park rangers are responsible for protecting a geographical area containing different types of animals from illegal poaching. The geographical area is split into distinct regions, each of which contains a different density of the animals. Each month, the park rangers can deploy a mixed strategy to decide which of the regions to patrol. After observing the rangers' strategy, the poachers can attempt to lay snares in any one of the regions. If the poachers attempt to lay snares in a region being patrolled by the park rangers, they are caught and penalized. Otherwise, the snare has a chance of catching one of the animals in that region based on the density of each of the animals.

The park ranger's resources are limited. Over the course of each year, the park rangers are allocated a budget for anti-poaching patrols. The further a region is from the park headquarters, the more expensive it is to patrol. If the park rangers deplete their budget, they are unable to launch any more patrols until the following year.
Moreover, the damage that poaching incurs on an animal population can fluctuate throughout the year. For example, the mating season for rhinos typically occurs in the months of October and November during which poaching is especially damaging to the rhino population. Such a dynamic, time-varying setting requires a model with an underlying state space. 

In order to address the limitations of existing formulations, we introduce \textit{discrete-time dynamic Stackelberg games (DSGs)}, an extension of the standard repeated Stackelberg formulation that includes an underlying, persistent discrete-time state space. The state space allows DSGs to model scenarios with dynamic utility functions, resources, and other states. 
The state can vary across steps of the DSG in a Markovian manner dependent on the actions of both the leader and the follower. In a DSG, the state is only relevant to the leader. The leader's reward and  the set of available actions are directly dependent on the current state, but the follower's are not.

Modeling a scenario as a standard Stackelberg game (and thus also a DSG) requires complete knowledge of the follower's payoff in different scenarios. In practice, the parameters of such models are often estimated based on historical data \cite{demiguel2009stochastic}. However, small modeling imperfections can lead to significant inefficiencies in the utility of the strategies derived from solving the Stackelberg game. 
Arguably, the most severe modeling limitation is that in order to compute the optimal strategy, the leader must precisely know the follower's utility function.

We study the online learning problem of sequentially synthesizing policies for the leader in a DSG under a setting in which the follower's utility function is unknown. In this setting, the leader must interact with the follower and incrementally learn its behavior over time by using past interactions with the follower to improve future decisions.



\subsection{Contributions}


We introduce a new modeling formalism, called a discrete-time dynamic Stackelberg game (DSG), that bridges the modeling formalism of repeated Stackelberg games and Markov decision processes. We propose an online learning algorithm for computing an adaptive policy for the leader in a scenario in which the follower's reward function is unknown. The proposed algorithm enables the leader to create an estimate of the follower's utility function and use that estimate to update its policy. 
We prove that the proposed algorithm achieves a sublinear regret bound with respect to the time horizon  --- establishing the first no-regret online learning algorithm in this setting. The regret bound's degree of sublinearity depends on the number of features representing the follower's utility function. It is independent of the size of the state space and polynomial in the rest of the parameters of the game.
Through a series of experiments, we evaluate the empirical performance of our algorithm and demonstrate the practical application of our approach.

\subsection{Related Work}

Generating policies for repeated Stackelberg games in a setting where the follower's utility function is unknown to the leader is an active area of research. In this setting, the agents play in some variation of a \textit{repeated Stackelberg game} where the leader makes decisions based on observations from previous steps in the game. The authors in \cite{letchford2009, marecki2012} study this problem with the primary objective of designing an algorithm that learns the follower's utility with low \textit{sample complexity}, i.e., in as few repetitions of the game as possible. Then, the learned representation can be used to play near optimally for the rest of time. Other variations of this problem include an online setting in \cite{blum2018}, where the follower's utility function can change (potentially adversarially) over time and the objective is to design a no-regret learning algorithm, that is, an algorithm which asymptotically converges to the optimal strategy. However, none of these works consider a dynamic scenario.
Since the follower's utility function in a dynamic Stackelberg game (DSG) is independent of the state space, it is natural to desire a regret bound independent of the size of the state space. This precludes the use of existing online learning algorithms for repeated Stackelberg games since an independent game would have to be learned and solved for each state in the state space.

DSGs are also closely related to \textit{stochastic games}, a well studied class of game played over a state-space where payoffs and transitions are also determined by the actions from a pair of agents \cite{neyman2003stochastic}. Online learning in stochastic games is also an ongoing area of research \cite{wei2017}. Closely related to our work, the authors in \cite{ouyang2017dynamic} give a general solution concept for stochastic dynamic games with asymmetric or unknown information. In contrast to DSGs, agents in a stochastic game choose actions simultaneously such that neither agent can observe its opponent's action before selecting its own action.

DSGs are similar to existing \textit{feedback Stackelberg games} \cite{Li2017} 
in which two agents play in a Stackelberg game over a state space in which individual policies are determined for each step of the game. The authors in \cite{chen1972stackelberg} introduce a solution concept for feedback Stackelberg games in which the leader knows both utility functions and the follower only knows their own utility without considering any learning. Feedback Stackelberg games are typically studied in continuous settings modeled by differential equations with perfect information, that is, policies are computed with full knowledge of the follower's utility function. In contrast, we study a setting in which the follower's utility function is unknown to the leader.


\textit{Markov decisions processes (MDPs)} are commonly used to model a single decision-making agent interacting with an environment \cite{puterman2014markov}. Online learning in MDPs is an active area of research. The authors in \cite{neu2013online} give an online learning algorithm in an MDP where the rewards are unknown and chosen by an adversary, but the transitions are known. 
By disregarding the reward and transition structure induced by the Stackelberg game, a DSG can be reduced to a Markov decision process (MDP) with an unknown reward and transition function. This makes it possible to treat the learning problem in a DSG as a \textit{reinforcement learning} problem in an MDP. However, reinforcement learning algorithms
fail to incorporate the reward and transition structure inherent in the DSG, their learning rate scales with the size of the state space. Moreover, if the leader in the DSG has action space $\leadacts$, the resulting MDP after the transformation has a continuous action space in $\R^{|\leadacts|}$, making this approach intractable even for small $|\leadacts|$.
In Section \ref{sec:experimental}, we directly compare our learning algorithm against a model-free reinforcement learning approach.

\subsection{Organization}
After giving a formal construction of dynamic Stackelberg games and a formal description of the learning problem in Section \ref{sec:prob_def}, we introduce a novel algorithm in Section \ref{sec:alg} based on optimistically choosing policies that are consistent with previous observations. In Section \ref{sec:analysis}, we analyze the algorithm and show that it achieves a regret that, with high probability, is sublinear in the number of time-steps.
In Section \ref{sec:experimental}, we experimentally demonstrate the impact of varying parameters of the game on regret and compare against a model-free reinforcement learning approach.

\section{Problem Definition} \label{sec:prob_def}

\begin{figure}[t]
    \fbox{
        \parbox{.945\linewidth}{
        \textbf{Parameters}: Discrete-time dynamic Stackelberg game $(\states, \leadacts, \followacts, r, u, P)$ and time horizon $T$.\\
        \textbf{For all episodes $t = 1, 2, \dots, T$, repeat}
        \begin{addmargin}[2em]{0em}
            \textbf{For all steps $h = 1, 2, \dots, H$, repeat}
                \begin{enumerate}
                    \item The learner observes the current state $s \in \states_h$.
                    \item Based on previous observations, the learner selects mixed strategy $\x \in \Delta(\leadacts(s))$.
                    \item The follower responds with action $b \in \followacts$ that maximizes its expected utility $\E_{a \samp \x} \left[ u(a,b) \right]$. The leader observes action $b$.
                    \item An action $a \samp \x$ is sampled. 
                    \item The learner obtains reward $r(s, a, b)$ and the environment transitions to the next state ${s' \samp P(\cdot \mid s, a, b)}$.
                \end{enumerate}
        \end{addmargin}
        }
    }
    \caption{The protocol for online learning in a discrete-time dynamic Stackelberg game.}
    \label{protocol:main}
\end{figure}

A discrete-time dynamic Stackelberg game (DSG) is played between a \textit{leader} agent and a \textit{follower} agent.
The game is played sequentially on a 6-tuple $(\states, \leadacts, \followacts, r, u, P)$ whose elements are defined as follows.
\begin{itemize}
    \item $\states$ is a set of states.
    \item $\leadacts = \{a_1, ..., a_n\}$ is a set of actions available to the leader. We denote the leader's available actions at a state $s$$\in$$\states$ by $\leadacts(s)$$\subseteq$$\leadacts$.
    \item $\followacts =\{b_1, ..., b_m\}$ is a set of actions available to the follower.
    \item $r : \states \times \leadacts \times \followacts \to \R$  is the reward function for the leader.
    \item $u : \leadacts \times \followacts \to \R$ is the utility function for the follower. (Notice the lack of dependency on the state space $\states$.)
    \item $P : \states \times \leadacts \times \followacts \times \states \to [0,1]$ is the transition function that satisfies $\sum_{s'\in \states}P(s,a,b,s')=1.$ 
\end{itemize}

We outline the interactions between the leader and the follower in a DSG in Fig. \ref{protocol:main}. First, the leader chooses a mixed strategy $\x_s$$\in$$\Delta(\leadacts(s))$ in state $s$$\in$$\states$, where $\Delta$ is the probability simplex of appropriate dimension. Then, the follower chooses an action $b$$\in$$\followacts$ in response such that  
\begin{equation}\label{follower_response}
    b \in \argmax_{b'\in \followacts} \E_{a \samp \x_s} \left[ u(a, b') \right].
\end{equation}
Note that in the considered setting, the follower is myopic and aims to maximize its immediate expected utility. 

An action $a$$\samp$$\x_s$ is sampled from the leader's mixed strategy to determine the next state. We limit the scope of this paper to the class of \emph{episodic} DSGs. That is, we assume that the state space $\states = \states_1 \cup \dots \cup \states_H$ is divided into $H$ disjoint \textit{layers} with $\states_1$$=$$\{s_1\}$, and the game is separated into a series of $T$ episodes. We denote by $s_{t,h}$$\in$$\mathcal{S}_h$ the state occupied in the $h$th step of episode $t$. Then, at a state $s_{t,h}$$\in$$\states_h$, the next state $s_{t,h+1} \in \states_{h+1}$ is guaranteed to be contained within the next layer, i.e., $\sum_{s_{t,h+1}\in \states_{h+1}}P(s_{t,h}, a, b, s_{t,h+1})$$=$$1$. 

Suppose that, at a state $s_{t,h}$, the leader takes the action $a_h$$\samp$$\x_{s_{t,h}}$, and the follower takes the action $b_h$ in response. Then, the leader receives the reward $r_h = r(s_{t,h}, a_h, b_h)$ for step $h$ of the episode. Consequently, the leader's cumulative reward for the episode becomes $\sum_{h=1}^{H} r_h$. 

Our objective in this paper is to design an algorithm which the leader can use to learn a strategy that maximizes its \textit{expected} cumulative reward over all episodes. To represent the leader's expected reward compactly, we can think of the follower's response as a function $\varphi$ $:$ $\Delta(\leadacts) \to \followacts$ where
\begin{equation}
    \varphi(\x) = \argmax_{b \in \followacts} \E_{a \samp \x} \left[ u(a, b) \right].
\end{equation}
Then, the leader's reward function over actions induces an auxiliary reward function $R : \states \times \Delta(\leadacts) \to \R$ over mixed strategies given by
\begin{equation} \label{eq:aux_reward}
    R(s, \x) = \E_{a \samp \x} \left[ r(s, a, \varphi(\x)) \right].
\end{equation}
Let 
$\pi : \states \to \Delta(\leadacts)$ be a policy 
that describes the leader's action selection for each state in the DSG.
We denote by $\Pi$ the set of all 
policies $\pi$.
The optimal policy for a leader \textit{that knows the follower's utility function $u$} is given by
 \begin{equation}\label{optimal_strategy}
   \pi^{\star}\in \argsup_{\pi \in \Pi} \mathbb{E}^{\pi}[ \sum_{t=1}^T \sum_{h=1}^H R(s_{t,h}, \pi(s_{t,h}) ) ],
  \end{equation}
where the expectation is taken over the trajectories (represented by $s_{t,h}$) induced by the leader's policy $\pi$, the follower's corresponding response as defined in \eqref{follower_response}, and the stochasticity in the transition function of the DSG. In this paper, we assume that the follower's utility function is \textit{unknown} to the leader. Hence, the leader needs to \textit{learn} a sequence of policies through interactions with the follower.


We use \textit{regret} \cite{lattimore2020bandit} to evaluate the asymptotic performance of a learning algorithm.
Regret measures the difference between the cumulative reward of a learning agent and the cumulative reward of the best strategy in hindsight. In our case, the best strategy in hindsight is the best strategy given that the follower's utility function was known to the leader ahead of time, i.e., the policy in \eqref{optimal_strategy}.
\begin{definition}[Regret]
The regret $R_T$ of a sequence of policies $\{\pi_{t,h} \in \Pi : t\in [T], h\in [H]\}$ is given by
  \begin{equation}
    \mathbb{E}^{\pi^{\star}}[ \sum_{t=1}^T \sum_{h=1}^H R(s_{t,h}, \pi^{\star}(s_{t,h}) ) ]
    -\mathbb{E} [ \sum_{t=1}^T \sum_{h=1}^H R(s_{t,h}, \pi_{t,h}(s_{t,h})) ].
    \label{def:regret}
  \end{equation}


\end{definition}

The main problem investigated in this paper is as follows.
\begin{problem} \label{prb:learning}
    Suppose the follower's utility function $u$ is fixed and unknown to the leader.
    Provide an online learning algorithm that computes a sequence of policies that minimizes the leader's regret $R_T$.
\end{problem}

\subsection{Assumptions}
Before proceeding with the learning algorithm that solves Problem \ref{prb:learning}, we list our assumptions on the interactions between the leader and the follower.

\begin{assumption}
    The range of the leader's reward function is $[0,1] \subset \R$, i.e., $r(s,a,b)$$\in$$[0,1]$ for all $s$$\in$$\states$, $a$$\in$$\leadacts$, and $b$$\in$$\followacts$.
\end{assumption}
We introduce the above assumption for notational simplicity.
Note that if the range of the reward function instead falls in some other closed interval $[a,b]$, rewards can be normalized, without loss of generality, to lie in the interval $[0,1]$.

We adopt the solution concept of \emph{strong Stackelberg equilibrium}, that ties are broken in the favor of the leader.
\begin{assumption}[Strong Stackelberg equilibrium]
    Let $\tau(\x)$$\subset$$\followacts$ represent the set of the follower's best responses to a leader's mixed strategy $\x$$\in$$\Delta(\leadacts)$. In any state $s$$\in$$\states$, if the leader plays mixed strategy $\x$, the follower is guaranteed to play an action $b$$\in$$\tau(\x)$ such that,  for all $b'$$\in$$\tau(\x)$,
    \begin{equation}
        \E_{a\samp \x}[r(s,a,b)] \geq \E_{a\samp \x}[r(s,a,b')].
    \end{equation}
   
\end{assumption}
Strong Stackelberg equilibrium is often adopted as the solution concept for Stackelberg games since it ensures the existence of an optimal strategy \cite{Stengel2004LeadershipWC}.

\begin{assumption}[Linear function approximation]
    The follower's utility function is linearly parameterized. That is, there exists a feature mapping $\feat : \leadacts \times \followacts \to \R^p$ that is known to the leader such that, for some $\thetaV^* \in \R^p$, 
    \begin{equation}
        u(a,b) = \langle \feat(a,b) , \thetaV^* \rangle.
    \end{equation}
\end{assumption}
Linear function approximation is commonly used in reinforcement learning \cite{jin20a, silver2007} and online learning \cite{Neu2020}. 

Finally, for ease of notation, we define the following matrices, which we call \textit{feature matrices}, for each action $b \in \followacts$. $\featM_b \in \R^{n \times p}$ such that
\begin{equation}
    [\featM_b]_{i} = f(a_i,b).
\end{equation}

\subsection{The Best Strategy in Hindsight}
Before discussing our algorithm for the general case, we first investigate the case in which the follower's utility function is known to gain some intuition.

If the follower's utility function is known, the leader's optimal policy $\pi^{\star} : \states \to \Delta(\leadacts)$ is the solution to a bilevel optimization problem.
By extending the linear program used in \cite{korzhyk2010} to solve a standard Stackelberg game, the optimal mixed strategy determined by policy $\pi^{\star}$ in state $s$ is given by the solution to the following optimization problem.
\begin{subequations}
\label{eq:opt_policy}
\begin{align}
    \displaystyle&\max\limits_{\x_s} \quad  V(\pi^{\star},s)        \\
   & \textrm{s.t.} \; \;
     V(\pi^{\star}, s) = \displaystyle\E\limits_{a \samp \x_s} \left[r(s,a,b) + \displaystyle\sum\limits_{s'} P(s,a,b,s') V(\pi^{\star}, s') \right] &\\
    & \x_s^T (\featM_{b} - \featM_{b'}) \thetaV^* \geq 0, \ \forall b' \in B \\
    & \x_s \in \Delta(\leadacts(s))
    \label{eq:opt_policy:epsilon_bound}
\end{align}
\end{subequations}
maximized over all $b \in \followacts$ where $V(\pi^{\star}, s')$ is the real value computed from the solution of the problem for $s' \in \states$. Then we define $\pi^{\star}(s) = \x_s$. The optimization problem for $s \in \states_h$, i.e. states in layer $h$, relies on the solution $V(\pi^{\star}, s')$ to the optimization problem for states in the next layer $\states_{h+1}$.
Therefore, we can efficiently solve (\ref{eq:opt_policy}) for each layer of the state space through backwards induction, reducing (\ref{eq:opt_policy}) to a linear program (LP).

The best policy in hindsight is the best policy had the follower's utility function been known from the beginning. Therefore, the solutions to the set of LPs in \eqref{eq:opt_policy} give the best policy in hindsight and represent the subgame perfect equilibrium of the DSG.

\section{The Algorithm} \label{sec:alg}

First we consider the simpler case of learning \textit{pure strategies}. In this case, Problem \ref{prb:learning} has a simple solution. The leader only needs to learn the function $\varphi : \leadacts \to \followacts$ such that
\begin{equation}
    \varphi(a_t) = \argmax_{b_t} u(a_t, b_t).
\end{equation}
Even if the utility function $u$ is unknown, $\varphi$ can be exactly determined by testing each input in $\leadacts$ only once. This strategy does not work for the general case since the continuous space of mixed strategies cannot be enumerated. 

Recall that the follower's utility function follows the relationship $u(a,b) = \langle \feat(a,b) , \thetaV^* \rangle$ for a fixed vector $\thetaV^*$$\in$$\R^p$.
Every time a mixed strategy $\x$$\in$$\Delta(\leadacts)$ is played and a response $b \in \followacts$ from the follower is observed, we gain more information about the nature of the follower's utility function. In particular, we know that $\forall b' \in \followacts$,
\begin{subequations} \label{eq:sample_ineq}
\begin{align} 
    \E_{a \samp \x} [u(a, b)] & \geq  \E_{a \samp \x} [u(a, b')] \\
    \E_{a \samp \x} [\langle f(a, b), \thetaV^* \rangle] &\geq  \E_{a \samp \x} [ \langle f(a, b') , \thetaV^* \rangle ] \\
      \x^T \featM_b\thetaV^* &\geq   \x^T \featM_{b'} \thetaV^* \\
      \x^T (\featM_b - \featM_{b'}) \thetaV^* &\geq 0.
\end{align}
\end{subequations}
If $\thetaV^*$ is unknown, then Equation \eqref{eq:sample_ineq} lets us interpret $(\x, b)$ as a sample for the halfspace parameterized by $\thetaV^*$.
Let $\{(\x_i , b_i)\}_{i \in I}$ represent the set of mixed strategies and actions chosen in response from the previous steps of the game.
We maintain a \textit{version space} $\Theta$$\subset$$\R^p$ of the possible values of $\thetaV^*$ given the set of previous plays $\{(\x_i, b_i)\}_{i \in I}$. Specifically, the version space is the convex region
\begin{equation}
    \Theta = \{\thetaV \in \R^p \mid ||\thetaV|| = 1 \land \forall i,b' \ \x_i^T (\featM_{b_i} - \featM_{b'}) \thetaV \geq 0 \}.
\end{equation}
Before each episode of the game, we optimistically solve for the optimal \textit{$\epsilon$-conservative policy} $\pi_t$. That is, for each state $s$$\in$$\states$ we solve for a triple $(\x_s, \thetaV_s, b_s)$ such that \textit{if} $\thetaV_s$ were the true parameterization of $u$, then the follower would respond to the mixed strategy $\x_s$ with $b_s$. We constrain $\thetaV_s$$\in$$\Theta$ to lie within the current version space. The policy is considered \textit{optimistic} since $\thetaV_s$ can take on any possible value that is consistent with previous observations. The mixed strategy $\x_s$ is computed using backwards induction in the following way.

For each state $s$, we compute $(\x_s, \thetaV_s)$ along with an \textit{optimistic value} $\estV_t(s)$ as the solution to the optimization problem
\begin{subequations}
\label{eq:cons_opt}
\begin{align}
    \displaystyle\max\limits_{\x_s,\thetaV_s} \quad & \estV_t(s)        \\
    \textrm{s.t.} \quad
    & \thetaV_s \in \Theta \label{eq:cons_opt:theta} \\
    & \x_s \in \Delta(\leadacts(s)) \\
    & \estV_t(s) = \displaystyle\E\limits_{a \samp \x_s} \left[r(s,a,b) + \displaystyle\sum\limits_{s'} P(s,a,b,s') \estV_t(s') \right] &\\
    & \x_s^T (\featM_{b} - \featM_{b'}) \thetaV_s \geq \epsilon, \ \forall b' \in \followacts, b' \neq b, \label{eq:cons_opt:epsilon_bound}
\end{align}
\end{subequations}
maximized over $b$$\in$$\followacts$. That is, we solve (\ref{eq:cons_opt}) for each $b$$\in$$\followacts$ and choose the solution for which $\estV_t(s)$ is maximized. Denote this distinguished action by $b_s$. The chosen policy is considered \textit{$\epsilon$-conservative} since constraint (\ref{eq:cons_opt:epsilon_bound}) enforces an $\epsilon$-size margin in the follower's decision boundary.

\begin{figure}[t]
    \fbox{
        \parbox{.945\linewidth}{
        \textbf{Parameters}:  Discrete-time dynamic Stackelberg game $(\states, \leadacts, \followacts, r, u, P)$ and time horizon $T$.\\
        $\Theta \gets \{\thetaV \in \R^p \mid ||\thetaV|| = 1\} $. \\
        \textbf{For all episodes $t = 1, 2, \dots, T$, repeat}
        \begin{addmargin}[2em]{0em}
            $\pi_t \gets \textproc{GetPolicy}(t, \Theta)$. \\
            \textbf{For all steps $h = 1, 2, \dots, H$, repeat}
                \begin{enumerate}
                    \item The learner observes the current state $s \in \states_h$.
                    \item The learner selects mixed strategy $\x \gets \pi_t(s)$.
                    \item The leader observes the follower's action $b$ and updates $\Theta \gets \textproc{Update}(\x,b,\Theta)$.
                    \item An action $a \samp \x$ is sampled. 
                    \item The learner obtains reward $r(s, a, b)$ and the environment transitions to the next state ${s' \samp P(\cdot \mid s, a, b)}$.
                \end{enumerate}
        \end{addmargin}
        }
    }
    \caption{The procedure for using the learning scheme to compute policies in a discrete-time dynamic Stackelberg game.}
     \label{procedure:main}
\end{figure}

The optimization problem for $s$$\in$$\states_h$, i.e. states in layer $h$, relies on the solution $\estV_t(s')$ to the optimization problem for states in the next layer $\states_{h+1}$.
Therefore, we can efficiently solve (\ref{eq:cons_opt}) for each layer of the state space by computing the final layer $\states_H$ first, and then iterating backwards.

This learning scheme solves Equation \eqref{eq:cons_opt} exactly $|\states|\cdot|\followacts|$ times to compute the triple $(\x_s, \thetaV_s, b_s)$ for each state in $\states$. Function \textproc{GetPolicy} in Algorithm \ref{alg:main} demonstrates how the policy is computed. These solutions give rise to a policy $\pi_t$ represented by the mixed strategy $\x_s$ at each state and an associated \textit{estimated value function} $\estV_t(s)$ at each state.

After computing a policy $\pi_t$ for episode $t$ in the game, it is used to play mixed strategies for the course of the episode. Beginning with the initial state $s_1$, the leader plays mixed strategy $\pi_t(s_1)$. After observing the follower's response $b$ to $\pi_t(s_1)$, the algorithm updates the version space $\Theta$ with the new information. Irregardless of which action the policy $\pi_t$ expected the follower to respond with, it is now known that $\x^T (\featM_{b} - \featM_{b'}) \thetaV \geq 0$. Therefore, the version space can be updated to 
\begin{equation}
    \Theta \cap \{\thetaV \in \R^p \mid ||\thetaV|| = 1 \land \forall b' \ \x^T (\featM_{b} - \featM_{b'}) \thetaV \geq 0 \}.
\end{equation}

\begin{algorithm}[t!]
  \caption{Subroutines used in the learning scheme. $P_{\epsilon}$ denotes the program in Equation \eqref{eq:cons_opt}.
    \label{alg:main}}
  \begin{algorithmic}[1]
    \Require{Dynamic Stackelberg game $(\states, \leadacts, \followacts, r, u, P)$.}
    \Statex
    \Function{GetPolicy}{$t, \Theta$}
      \For{$s \in \states$}
        \State $\pi_t(s) \gets 0$
        \State $\estV_t(s) \gets 0$
      \EndFor
      \For{$h \gets H \textrm{ to } 1$}
        \For{$s \in \states_h$}
            \State $\pi_t(s), \estV_t(s) \gets P_{\epsilon}(\Theta,\estV_t,s)$
        \EndFor
      \EndFor
      \State \Return{$\pi_t$}
    \EndFunction
    \Statex
    \Function{Update}{$\x, b, \Theta$}
        \State \Return{$\Theta \cap \{\thetaV \in \R^p \mid ||\thetaV|| = 1 \land \forall b' \ \x^T (\featM_{b} - \featM_{b'}) \thetaV \geq 0 \}$}
    \EndFunction
  \end{algorithmic}
\end{algorithm}

The function \textproc{Update} in Algorithm \ref{alg:main} encapsulates this update rule. Fig. \ref{procedure:main} outlines how the various components of the algorithm are used to compute and update policies. Fig. \ref{fig:learning_flow} gives a pictorial representation of the learning scheme and dynamic Stackelberg game.

\begin{remark}
    The procedure \textproc{GetPolicy} requires solving $|\states|\cdot|\followacts|$ copies of the nonconvex quadratic program in \eqref{eq:cons_opt} to compute the triple $(\x_s, \thetaV_s, b_s)$ for each state in $\states$.
\end{remark}

\begin{figure*}[t]
    \centering
    \includegraphics[width=.9\textwidth]{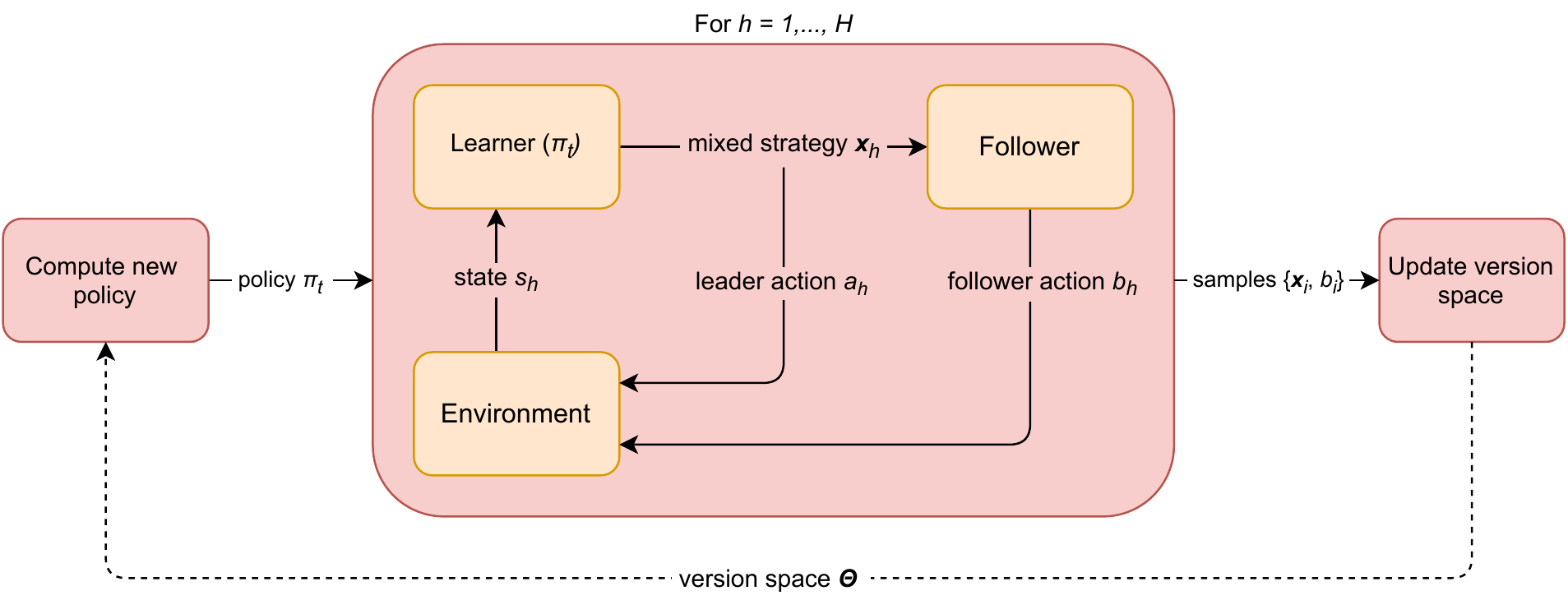}
    \caption{A flow chart for using the learning scheme to compute policies in a discrete-time dynamic Stackelberg game.}
    \label{fig:learning_flow}
\end{figure*}

\section{Regret Analysis} \label{sec:analysis}

In this section, we establish a high-probability regret bound for the proposed scheme outlined in Fig. \ref{procedure:main}. 

\begin{theorem}\label{thm:main}
Fix confidence parameter $\delta>0$. The regret of the learning scheme in Fig. \ref{procedure:main} satisfies 
\begin{equation}
\begin{aligned}
    R_T \leq &\left(T\right)^{1-\frac{1}{p}} \left(d \sqrt{mn} (1 + \sqrt{nH}) H + H\right) \\ &+  H\sqrt{\frac{T}{2}\ln \left(\frac{1}{\delta} \right)}
    \end{aligned}
\end{equation}
with probability at least $1 - \delta$ for constant $d \in \R$.
\end{theorem}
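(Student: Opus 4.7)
The plan is to decompose the regret into three contributions and balance them by choosing $\epsilon$ appropriately: (i) an optimism bias incurred by restricting the search to $\epsilon$-conservative strategies in \eqref{eq:cons_opt:epsilon_bound}, (ii) a ``mistake'' term accounting for episodes in which the follower's actual response disagrees with the algorithm's prediction $b_s$ at some state, and (iii) a martingale concentration term capturing the gap between expected and realized per-episode rewards.

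First I would establish optimism modulo $\epsilon$. Because the true parameter $\thetaV^*$ always lies in the current version space $\Theta$, the optimistic value $\estV_t(s_1)$ produced by \textproc{GetPolicy} upper-bounds the value of the best $\epsilon$-conservative policy under the true utility. A perturbation argument then shows that the best $\epsilon$-conservative policy differs in value from $V^{\pi^\star}(s_1)$ of \eqref{optimal_strategy} by $O(\epsilon H \sqrt{mn})$, where the $\sqrt{mn}$ comes from the operator norm of the difference matrices $\featM_b - \featM_{b'}$ under bounded-feature assumptions, and the $H$ comes from propagating the perturbation through the $H$-stage backwards induction in \eqref{eq:opt_policy}. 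Summed over $T$ episodes this contributes $O(\epsilon T H \sqrt{mn})$ to the regret.

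Next I would bound the mistake term. An episode in which the follower chooses $b \neq b_s$ at some step induces, via \textproc{Update}, a new halfspace constraint $\x^T(\featM_b - \featM_{b_s})\thetaV \geq 0$ that must be added to $\Theta$. Crucially, because the policy was $\epsilon$-conservative on the predicted side, this constraint removes a cap of angular radius $\Omega(\epsilon/\sqrt{mn})$ from $\Theta$ relative to the unit sphere in $\R^p$. A standard packing argument on $S^{p-1}$ then bounds the total number of distinct mistakes across the run by $M = O((\sqrt{mn}/\epsilon)^{p-1})$. Since each mistake episode contributes at most $H$ to the regret, and at most $H$ mistakes can occur within one episode, the total mistake-related regret is $O(H(\sqrt{mn}/\epsilon)^{p-1})$; the $(1+\sqrt{nH})$ factor appearing in the theorem arises from having to inflate the $\epsilon$-margin in \eqref{eq:cons_opt:epsilon_bound} so that it dominates the cumulative perturbation of the estimates $\estV_t(s')$ incurred during the backwards-induction recursion across the $H$ layers.

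Finally, the realized per-episode rewards form a bounded martingale difference sequence (bounded by $H$) around their expectations, so Azuma\textendash Hoeffding gives the $H\sqrt{(T/2)\ln(1/\delta)}$ concentration term with probability at least $1-\delta$. Combining the three pieces and tuning $\epsilon \propto T^{-1/p}$ balances the $O(\epsilon T)$ bias term against the $O(\epsilon^{-(p-1)})$ mistake term, producing the announced $T^{1-1/p}$ rate. The main obstacle will be making the version-space shrinkage rigorous: tracking how each halfspace cut interacts with the cone structure $\{\|\thetaV\|=1\}\cap\Theta$, and carefully propagating the $\epsilon$-margin through the backwards induction in order to obtain the explicit $(1+\sqrt{nH})H$ factor. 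The volume-packing step itself is standard but requires attention to degenerate cases in which $(\featM_b-\featM_{b'})^T\x$ has small norm, which must be ruled out either by normalization or by treating such steps separately.
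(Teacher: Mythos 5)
Your decomposition --- an $\epsilon$-conservative bias term, a mistake term controlled by version-space shrinkage plus a packing argument on the unit sphere in $\R^p$, and an Azuma concentration term, all balanced by choosing $\epsilon\propto T^{-1/p}$ --- is exactly the architecture of the paper's proof, and your first two pieces line up with Lemmas \ref{lemma:epsilon_diff}, \ref{lemma:mistake_shrink} and \ref{lemma:mistake_bound}. Two small corrections there: the paper invokes Proposition \ref{prop:scale_f} to rescale the features so that $\opnorm{\featM_b-\featM_{b'}}\le 1$, which removes your $\sqrt{mn}$ factor from the cap radius and gives the clean $(2/\epsilon)^{p-1}$ mistake count; and your worry about steps where $(\featM_b-\featM_{b'})^T\x$ has small norm is unfounded, since by Cauchy--Schwarz a smaller norm only forces $\|\thetaV_s-\thetaV^*\|$ to be larger, i.e.\ the excluded ball is bigger, not smaller. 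Also, the $(1+\sqrt{nH})$ factor belongs to the bias term, not the mistake term: it arises from bounding $\|\v^*_{s,b}\|\le\sqrt{n}H$ when the $O(\epsilon d\sqrt{m})$ perturbation of the mixed strategy is propagated through the backwards-induction recursion, not from inflating the margin to dominate estimation error.

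The one step that would fail as you have written it is the martingale. The regret is defined in expectation, so after the decomposition it contains $H\sum_{t}\lambda_t(s_1)$, where $\lambda_t(s_1)$ is the history-dependent \emph{probability} that $\pi_t$ makes a mistake. The packing argument bounds only the \emph{realized} number of mistakes $\sum_t X_t\le(2/\epsilon)^{p-1}$, and a concentration inequality for ``realized per-episode rewards around their expectations'' does not connect these two quantities. The martingale you actually need is $M_t=\sum_{i\le t}(X_i-\lambda_i)$, whose increments are bounded by $1$; Azuma applied to it converts the deterministic bound on $\sum_i X_i$ into the high-probability bound $\sum_i\lambda_i\le(2/\epsilon)^{p-1}+\sqrt{2T\ln(1/\delta)}$, which is where the $H\sqrt{T\ln(1/\delta)}$ term (up to constants) in the theorem comes from. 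Without redirecting the concentration step at the mistake indicators, the term $H\sum_t\lambda_t$ in your decomposition remains uncontrolled and the proof does not close.
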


\begin{remark}
    Most notably, the sublinearity in $T$ of the regret depends on the dimension $p$ of the follower's parameter space. However, the regret depends only polynomially in the remaining parameters of the game and has a complete lack of dependence on the size of the state space. In Section \ref{sec:experimental}, we experimentally show the dependence of the regret on $p$ and the independence on the size of the state space. 
\end{remark}
\begin{remark}
We further note that the result of Theorem \ref{thm:main} holds with high probability. Typically, high probability regret bounds are more challenging to derive than expected regret bounds in online learning \cite{abernethy2009beating,neu2015explore,ghasemi2021noregret}. As we will see, we establish the result of Theorem \ref{thm:main} by a careful application of the Azuma's inequality of Martingales \cite{chung2006concentration}. Given this result, an upper bound on the expected regret of the proposed
scheme can be obtained by straightforward integration of the
tail of the high-probability regret bound that we establish
in Theorem \ref{thm:main} (see e.g., \cite{ghasemi2021online}).
\end{remark}


Next, we give a high-level idea of our proof techniques, which is then followed by the formal proof of Theorem \ref{thm:main}.

\subsection{Proof Outline}\label{sec:outline}


At the beginning of each episode of the game, we optimistically solve for the optimal $\epsilon$-conservative policy $\pi_t$ by calling \textproc{GetPolicy}. Each mixed strategy $\x_s = \pi_t(s)$ is associated with a choice $\thetaV_s \in \Theta$ and expected response $b_s \in \followacts$. Let $b'_s$ be the \textit{actual} action that the follower would take in response to mixed strategy $\x_s$. If $b_s'=b_s, \forall s \in \states$, then we will show that policy $\pi_t$ is at least as good as the optimal $\epsilon$-conservative mixed since $\thetaV^* \in \Theta$ and thus one of the possible values of $\thetaV$ consistent with previous observations. 

If instead $b_{s} \neq b'_{s}$ for some $s \in \states$, then we say $\pi_t$ has some probability of making a \textit{mistake}. Such a scenario is called a mistake since the computed policy $\pi_t$ with associated triple $(\x_s, \thetaV_s, b_s)$ expects the follower to play $b_s$ in response to mixed strategy $\x_s$. If $b_s$ is not actually played in response to $\x_s$, then the computed value $\estV_t(s)$ is overly optimistic. In Lemma \ref{lemma:actual_value_bound}, we show that playing policy $\pi_t$ induces regret proportional to the probability of making a mistake while executing $\pi_t$.

The resulting observation $(\x_s,b')$ after making a mistake shrinks the version space $\Theta$ by inducing new hyper-plane constraints. In Lemma \ref{lemma:mistake_shrink} we show that, since $\x_t$ is chosen to be $\epsilon$-conservative, whenever it leads to a mistake, $\Theta$ is shrunk by at least some fixed volume. In Lemma \ref{lemma:mistake_bound}, we show that since the initial volume of $\Theta$ is finite, the total number of mistakes is bounded. 

The learning scheme induces two sources of regret: regret from choosing an $\epsilon$-conservative policy for each episode and regret from the probability of making a bounded number of mistakes. In Lemma \ref{lemma:epsilon_diff} we upper bound the regret resulting from choosing an $\epsilon$-conservative policy.
Using the fixed upper bound on the number of mistakes induced by policy $\pi_t$, we derive a probabilistic upper bound on the cumulative \textit{probability} that policy $\pi_t$ leads to a mistake using Azuma's inequality of Martingales. Finally, at the end of Section \ref{sec:analysis}, we use this sequence of lemmas to prove Theorem \ref{thm:main}.

\subsection{Proof of Theorem \ref{thm:main}}
In this section, we establish the proof of Theorem \ref{thm:main} by providing a number of intermediate lemmas which we discussed in Section \ref{sec:outline}. Proofs of the intermediate lemmas are left to the appendix.


Before showing that the learning scheme makes a bounded number of mistakes in Lemmas \ref{lemma:mistake_shrink} and \ref{lemma:mistake_bound}, we show that the follower's reward function can be scaled without affecting its policy. This allows us to assume, without loss of generality, that the difference of feature matrices $\opnorm{\M_b - \M_{b'}} \leq 1$ for any $b, b' \in \followacts$. 

\begin{proposition} \label{prop:scale_f}
    The follower's policy is invariant under feature mappings up to scalar multiplication. That is, if $ \varphi(\x)$ and $ \varphi'(\x)$ denote the policies for feature mappings $f(a,b)$ and $f'(a,b) = c \cdot f(a,b)$, respectively, for $c\in \R$, then $\varphi(\x) = \varphi'(\x)$.
\end{proposition}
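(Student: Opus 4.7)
The plan is to unpack the definition of the follower's response under the new feature mapping and show that the scaling factor factors out without changing the $\argmax$. Writing $u'(a,b) = \langle f'(a,b), \thetaV^* \rangle = \langle c \cdot f(a,b), \thetaV^* \rangle = c \cdot u(a,b)$, the follower's expected utility under the scaled mapping at any mixed strategy $\x$ becomes
\begin{equation*}
    \E_{a \samp \x}[u'(a,b)] = c \cdot \E_{a \samp \x}[u(a,b)].
\end{equation*}
For any fixed $\x$, multiplying the objective of an $\argmax$ by a positive constant leaves the maximizing set unchanged, so the follower's best-response set $\tau(\x)$ is preserved.

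Next, I would invoke the strong Stackelberg equilibrium tie-breaking assumption. The selection rule within $\tau(\x)$ is determined by maximizing the leader's expected reward $\E_{a\samp\x}[r(s,a,b)]$, which does not depend on $f$ at all. Therefore, even when there are multiple follower best responses, the action actually selected by $\varphi'(\x)$ matches that selected by $\varphi(\x)$. Combining the two observations yields $\varphi'(\x) = \varphi(\x)$ for every $\x \in \Delta(\leadacts)$.

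The only subtlety, and the main thing to flag, is the sign of $c$: if $c < 0$, multiplying by $c$ flips the $\argmax$ into an $\argmin$ and the statement fails, while $c=0$ trivializes $u'$ so that every action is a best response. Thus the proposition should be read as applying for positive scalars $c>0$; this is the regime in which the normalization $\opnorm{\M_b - \M_{b'}} \leq 1$ used subsequently in the regret analysis is legitimate. I would state this restriction explicitly at the start of the proof and then carry out the two-line computation above, so there is essentially no hard calculation — the content of the proposition is simply that $u$ enters the follower's decision only through the sign and relative ordering of inner products with $\thetaV^*$, which are invariant under positive scaling.
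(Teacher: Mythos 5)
Your proof is correct and follows essentially the same route as the paper's: write $u'(a,b) = c\cdot u(a,b)$, pull the constant through the expectation, and observe the $\argmax$ is unchanged. Your added caveat that the claim only holds for $c>0$ (and that the tie-breaking rule is independent of $f$) is a genuine and worthwhile refinement — the paper states the proposition for $c\in\R$ and its proof silently assumes positivity when factoring $c$ out of the $\argmax$.
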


 Given Proposition \ref{prop:scale_f}, the feature map $f$ can be scaled arbitrarily, without loss of generality.

We proceed with the first two lemmas, by first showing that samples gathered when the learning schemes makes mistakes, are guaranteed to shrink the version space by a fixed amount. We use this to show that policies resulting from the optimization problem in Equation \eqref{eq:cons_opt} throughout the execution of the learning scheme, only make a bounded number of mistakes.

\begin{lemma} \label{lemma:mistake_shrink}
Let $(\x_s, \thetaV_s, b_s)$ for any $s \in \states$ be the solution from solving Equation \eqref{eq:cons_opt}.
If the solution $(\x_s, \thetaV_s, b_s)$ makes a mistake during execution in the sense that action $b_s$ is predicted by $\thetaV_s$ but $b^*$ is actually played in response to $\x$, then \textproc{Update($\x,b^*,\Theta$)} shrinks $\Theta$ by at least 
$\Theta \cap B_{\epsilon}(\thetaV_s)$ 
\end{lemma}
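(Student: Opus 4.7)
The plan is to show directly that every $\thetaV$ inside the ball $B_{\epsilon}(\thetaV_s)$ (intersected with the unit sphere and with the old $\Theta$) fails the new constraint added by \textproc{Update}, so that after the update such points no longer lie in the version space.

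First, I would write down the two facts I have available. From the $\epsilon$-conservative constraint \eqref{eq:cons_opt:epsilon_bound} used when $(\x_s,\thetaV_s,b_s)$ was selected, I know
\begin{equation*}
\x_s^T(\featM_{b_s}-\featM_{b^*})\thetaV_s \;\geq\; \epsilon,
\qquad \text{equivalently} \qquad
\x_s^T(\featM_{b^*}-\featM_{b_s})\thetaV_s \;\leq\; -\epsilon.
\end{equation*}
On the other hand, \textproc{Update}$(\x,b^*,\Theta)$ adjoins the half-space constraint $\x^T(\featM_{b^*}-\featM_{b_s})\thetaV \geq 0$ (taking $b'=b_s$ in the definition of the update), since $b^*$ was the action actually observed in response to $\x$.

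Next, for an arbitrary $\thetaV\in B_{\epsilon}(\thetaV_s)\cap\Theta$, I would write $\thetaV=\thetaV_s+\delta$ with $\|\delta\|_2<\epsilon$ and decompose
\begin{equation*}
\x^T(\featM_{b^*}-\featM_{b_s})\thetaV
\;=\;
\x^T(\featM_{b^*}-\featM_{b_s})\thetaV_s
\;+\;
\x^T(\featM_{b^*}-\featM_{b_s})\delta.
\end{equation*}
The first term is at most $-\epsilon$ by the inequality above. For the second term I would use Cauchy--Schwarz together with the operator-norm bound:
\begin{equation*}
\bigl|\x^T(\featM_{b^*}-\featM_{b_s})\delta\bigr|
\;\leq\;
\|\x\|_2\;\opnorm{\featM_{b^*}-\featM_{b_s}}\;\|\delta\|_2.
\end{equation*}
Here $\|\x\|_2\leq\|\x\|_1=1$ because $\x$ is a probability vector, and I would appeal to Proposition \ref{prop:scale_f} to rescale the feature map $\feat$ so that $\opnorm{\featM_b-\featM_{b'}}\leq 1$ for all $b,b'\in\followacts$ without changing any follower response. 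Combining these gives $\x^T(\featM_{b^*}-\featM_{b_s})\thetaV<-\epsilon+\epsilon=0$, which violates the newly added constraint. Hence every such $\thetaV$ is removed from $\Theta$ by \textproc{Update}, which is exactly the claim that $\Theta\cap B_{\epsilon}(\thetaV_s)$ is excised.

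The argument is almost entirely mechanical once Proposition \ref{prop:scale_f} is invoked; the only subtlety I anticipate is being careful about the normalization conventions. In particular, I need to make sure the lemma's guarantee is stated in terms of the Euclidean norm used by $B_{\epsilon}(\cdot)$, and that the same norm is used to control both $\|\x\|_2$ and $\opnorm{\featM_{b^*}-\featM_{b_s}}$. I would also briefly note that $\thetaV_s\in\Theta$ by constraint \eqref{eq:cons_opt:theta}, so the removed set $\Theta\cap B_{\epsilon}(\thetaV_s)$ is indeed a nontrivial (spherical-cap) subset of the current version space, giving a genuine, non-vacuous shrinkage — this is what the subsequent Lemma \ref{lemma:mistake_bound} will need in order to cap the total number of mistakes.
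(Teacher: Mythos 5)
Your proposal is correct and uses essentially the same argument as the paper: combine the $\epsilon$-margin constraint from \eqref{eq:cons_opt:epsilon_bound} with the newly adjoined half-space for the observed $b^*$, then apply Cauchy--Schwarz, $\|\x\|\leq 1$, and the Proposition~\ref{prop:scale_f} rescaling $\opnorm{\featM_{b_s}-\featM_{b^*}}\leq 1$ to conclude that anything within distance $\epsilon$ of $\thetaV_s$ violates the new constraint. The only (cosmetic) difference is that you run the inequality for an arbitrary $\thetaV\in B_{\epsilon}(\thetaV_s)$ whereas the paper writes it for $\thetaV^*$ and leaves the extension to all surviving points implicit; your phrasing matches the lemma statement more directly.
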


Since the volume of the initial version space $\Theta$ is finite, Lemma \ref{lemma:mistake_shrink} bounds the total number of possible mistakes.

\begin{lemma} \label{lemma:mistake_bound}
Solutions $(\x_s, \thetaV_s, b_s)$ found by the learning scheme produce mistakes during execution at most $(\frac{2}{\epsilon})^{p-1}$ times.
\end{lemma}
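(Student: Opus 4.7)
The plan is a packing argument on the unit sphere $S^{p-1}\subset\R^p$. Enumerate the mistakes made throughout the execution in chronological order by $i=1,2,\ldots,k$, and let $\thetaV^{(i)}$ denote the parameter vector selected by the optimistic program in Equation \eqref{eq:cons_opt} at the time of the $i$-th mistake. Since $\Theta\subseteq S^{p-1}$ at all times, every $\thetaV^{(i)}$ is a unit vector. By Proposition \ref{prop:scale_f} I will rescale the feature map so that $\opnorm{\featM_b-\featM_{b'}}\le 1$, which is the normalization assumed in the proof of Lemma \ref{lemma:mistake_shrink}.

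The first step is to show that the selected parameters form an $\epsilon$-separated set on the sphere. By Lemma \ref{lemma:mistake_shrink}, immediately after the $i$-th mistake the \textproc{Update} operation excises $\Theta\cap B_\epsilon(\thetaV^{(i)})$ from the current version space. For any later mistake index $j>i$, the chosen $\thetaV^{(j)}$ lies in the (strictly smaller) version space remaining after time $i$, so $\thetaV^{(j)}\notin B_\epsilon(\thetaV^{(i)})$, yielding $\|\thetaV^{(j)}-\thetaV^{(i)}\|\ge\epsilon$. Hence $\{\thetaV^{(i)}\}_{i=1}^k$ is an $\epsilon$-separated subset of $S^{p-1}$ with respect to Euclidean distance in $\R^p$.

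The second step is to bound the cardinality of any such $\epsilon$-separated subset of the unit sphere by $(2/\epsilon)^{p-1}$. My plan is to surround each $\thetaV^{(i)}$ by an open Euclidean ball of radius $\epsilon/2$ in $\R^p$; the $\epsilon$-separation makes these balls pairwise disjoint, and they are all contained in the thin annulus $A=\{v\in\R^p : 1-\epsilon/2\le\|v\|\le 1+\epsilon/2\}$. Comparing $p$-dimensional Lebesgue volumes of the disjoint union of balls against $A$ then yields the claimed bound, since the annulus has volume proportional to $\epsilon\cdot |S^{p-1}|$ while each ball contributes volume proportional to $\epsilon^p$, so $k$ is at most the ratio, which scales like $(2/\epsilon)^{p-1}$.

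The main obstacle is recovering the sharp exponent $p-1$ rather than $p$. A crude enclosure of the packing inside a full $p$-dimensional ball of radius $1+\epsilon/2$ would yield only the weaker bound $(O(1)/\epsilon)^p$. The improvement to $p-1$ requires exploiting the fact that the $\thetaV^{(i)}$ lie on the lower-dimensional manifold $S^{p-1}$, either through the annulus computation above (in which only the radial direction contributes a factor of $\epsilon$) or, equivalently, by projecting each ball onto a local tangent hyperplane and invoking a standard packing bound for a ball in $\R^{p-1}$. Getting the constants to collapse to exactly $(2/\epsilon)^{p-1}$, as opposed to an additional dimension-dependent multiplicative factor, is the most delicate part of the argument.
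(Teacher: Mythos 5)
Your proposal is correct and follows essentially the same route as the paper: Lemma \ref{lemma:mistake_shrink} shows that the parameter vectors chosen at successive mistakes form an $\epsilon$-separated set, and a packing (volume-ratio) argument with pairwise-disjoint $\frac{\epsilon}{2}$-balls bounds their number. The only divergence is the final volume computation: where you enclose the disjoint balls in a thin annulus of $\R^p$ and compare $p$-dimensional volumes (which, as you yourself anticipate, leaves a dimension-dependent multiplicative factor of order $p$ rather than the exact constant), the paper instead identifies the initial version space $\{\thetaV \in \R^p \mid ||\thetaV|| = 1\}$ with the unit ball of $\R^{p-1}$ and takes the ratio of $(p-1)$-dimensional volumes $V_{p-1}(1)/V_{p-1}(\frac{\epsilon}{2}) = \left(\frac{2}{\epsilon}\right)^{p-1}$, which is how it lands exactly on the stated bound.
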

 
Now that we have given an upper bound on the number of mistakes that mixed strategies resulting from the learning scheme can make, we proceed by bounding the regret induced by solving for an $\epsilon$-conservative policy.

Define the \textit{optimal $\epsilon$-conservative policy} $\pi^{\epsilon} : \states \to \Delta(\leadacts)$ in each state $s \in \states$ as the solution to the following optimization problem.
\begin{subequations}
\label{eq:optimal_cons_opt}
\begin{align}
    &\max\limits_{\x_s} \quad  V(\pi^{\epsilon},s)        \\
    &\textrm{s.t.} \quad
     V(\pi^{\epsilon},s) = \displaystyle\E\limits_{a \samp \x_s} \left[r(s,a,b) + \displaystyle\sum\limits_{s'} P(s,a,b,s') V(\pi^{\epsilon},s') \right] &\\
    & \qquad\x_s^T (\featM_{b} - \featM_{b'}) \thetaV^* > \epsilon, \ \forall b' \in B \\
    & \qquad\x_s \in \Delta(\leadacts(s))
        \label{eq:cons_opt:epsilon_bound2}
\end{align}
\end{subequations}
maximized over all $b \in \followacts$. Then $\pi^{\epsilon}(s) := \x_s$. Notice that the solutions to Equation \eqref{eq:optimal_cons_opt} can be computed using backwards induction in the same way that Equation \eqref{eq:cons_opt} is.

\begin{definition}
  Let $\actualV(\pi, s)$ represent the \textit{actual} expected value obtained during an episode from playing policy $\pi$ beginning in state $s$. 
\end{definition}

We say that a trace $\tau$ resulting from a policy $\pi_t$ \textit{makes a mistake} if $\tau$ results in playing a policy $\x_s$ at some state $s \in \states$ with the expectation that action $b_s$ will be played (i.e. $\x_s^T (\featM_{b_s} - \featM_{b'}) \thetaV_s \geq 0$ for all $b' \in \followacts$) but in reality, a different action is played (i.e. $\x_s^T (\featM_{b_s} - \featM_{b'}) \thetaV^* \leq 0$ for some $b' \in \followacts$).
First off, notice that if no traces obtained by following policy $\pi_t$ from state $s$ makes a mistake, then $\actualV(\pi_t, s) = \estV_t(s)$. Otherwise, we have the following lemma.

\begin{lemma} \label{lemma:actual_value_bound}
    Let $\lambda_t(s)$ be the probability that $\pi_t$ makes a mistake during the episode beginning from state $s$. Then,
    \begin{equation}
        \actualV(\pi^{\star}, s) \geq \actualV(\pi_t, s) \geq \actualV(\pi^{\epsilon}, s) - H \cdot \lambda_t(s).
    \end{equation}
\end{lemma}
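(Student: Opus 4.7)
The plan is to establish the two inequalities in turn, using quite different arguments for each. The inequality $\actualV(\pi^{\star}, s) \geq \actualV(\pi_t, s)$ is essentially immediate: $\pi^{\star}$ is defined in \eqref{optimal_strategy} as the policy that maximizes the true expected cumulative reward against the follower's actual best response (governed by $\thetaV^*$), so no other policy---in particular $\pi_t$---can outperform it under $\actualV$.

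For the harder inequality $\actualV(\pi_t, s) \geq \actualV(\pi^{\epsilon}, s) - H\lambda_t(s)$, my plan is to introduce the optimistic estimate $\estV_t(s)$ produced by \textproc{GetPolicy} as an intermediate quantity and argue $\estV_t(s) \geq \actualV(\pi^{\epsilon}, s)$ on the one side and $\actualV(\pi_t, s) \geq \estV_t(s) - H\lambda_t(s)$ on the other. The first inequality is straightforward: the true parameter $\thetaV^*$ satisfies every halfspace constraint generated by past plays, so $\thetaV^* \in \Theta$ and $(\pi^{\epsilon}(s), \thetaV^*)$ together with the follower's true response is feasible for the program \eqref{eq:cons_opt} at every state. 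Since \textproc{GetPolicy} maximizes $\estV_t$ by backwards induction over layers and \eqref{eq:optimal_cons_opt} is simply \eqref{eq:cons_opt} restricted to $\thetaV = \thetaV^*$, a layer-by-layer induction gives $\estV_t(s) \geq \actualV(\pi^{\epsilon}, s)$.

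The second inequality rests on the observation that $\estV_t(s)$ is the \emph{exact} expected return of $\pi_t$ under the fictitious dynamics in which, at every visited state $s'$, the follower responds with the predicted action $b_{s'}$ that was used when solving \eqref{eq:cons_opt}. On any episode trajectory that makes no mistake, those fictitious dynamics coincide with the true ones, so the induced distributions over rewards and transitions agree. On the complementary event, of probability at most $\lambda_t(s)$, the two trajectories may diverge; but because per-step rewards lie in $[0,1]$, the cumulative reward on any trajectory lies in $[0,H]$, so the absolute discrepancy between the contributions to $\actualV(\pi_t,s)$ and $\estV_t(s)$ from this event is at most $H$. Combining the two events yields $\actualV(\pi_t,s) \geq \estV_t(s) - H\lambda_t(s)$, and chaining with $\estV_t(s) \geq \actualV(\pi^{\epsilon}, s)$ completes the inequality.

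The main obstacle will be formalizing the no-mistake coupling cleanly when mistakes can occur at any of the $H$ layers and the mistake event at layer $h$ depends on the random state reached from layer $h-1$. The cleanest approach I see is a backwards induction on the layer index: at layer $H$ the one-step reward is identical in the two value functions whenever the predicted follower response matches the true one; at layer $h$, one writes $\actualV(\pi_t, s_h)$ as an expectation over $a \sim \x_{s_h}$ of the immediate reward plus $\actualV(\pi_t, s_{h+1})$, splits this expectation according to whether $b_{s_h}$ matches the follower's actual response, applies the inductive hypothesis on the no-mistake branch, and upper-bounds the discrepancy on the mistake branch by $H$ times the conditional probability of that branch. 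A union bound over layers then aggregates the per-layer mistake probabilities into the single episode-level quantity $\lambda_t(s)$.
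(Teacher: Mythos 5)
Your proposal is correct and follows essentially the same route as the paper: both inequalities are obtained by inserting the optimistic value $\estV_t(s)$ as an intermediate quantity, using $\thetaV^* \in \Theta$ to get $\estV_t(s) \geq \actualV(\pi^{\epsilon}, s)$, and conditioning on the episode-level mistake event to get $\actualV(\pi_t,s) \geq \estV_t(s) - H\lambda_t(s)$ from the fact that cumulative rewards lie in $[0,H]$. The only cosmetic difference is that the paper conditions directly on whether a whole sampled trace makes a mistake (so no layer-by-layer induction or union bound is needed); if you do formalize it layer by layer, partition by the time of the \emph{first} mistake rather than union-bounding possibly overlapping per-layer events, since the latter gives $\sum_h p_h \geq \lambda_t(s)$, which points the wrong way.
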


The difference in optimality between $\actualV(\pi^{\star}, s)$ and $\actualV(\pi^{\epsilon}, s)$ differs only by the difference in feasibility region induced by relaxing (\ref{eq:cons_opt:epsilon_bound2}) to $\x_s^T (\featM_{b_s} - \featM_{b'}) \thetaV\geq 0$. Therefore, the suboptimality of $V_{\epsilon}^*$ is bounded by the maximum difference in value induced by two solutions that differ by the Hausdorff distance between the feasibility regions. The Hausdorff distance is realized at the critical points of the feasibility regions. Let $C \subset \followacts$, $|C| = q$ represent a critical boundary of the feasibility space induced by the relaxation of (\ref{eq:cons_opt:epsilon_bound2}). That is, there exists some ${\x^*}$ such that ${\x^*}^T (\featM_b - \featM_{b'}) \thetaV^* = 0$ for all $b' \in C$. 

The Hausdorff distance is upper bounded by the maximum distance between $\x^*$ subject to ${\x^*}^T (\featM_b - \featM_{b'}) \thetaV^* = 0$ and 
$\x$ subject to ${\x}^T (\featM_b - \featM_{b'}) \thetaV^* = \epsilon$ for all $b' \in C$ over all $C \subset \followacts$.
So we have $({\x}^T - {\x^*}^T)(\featM_b - \featM_{b'}) \thetaV^* = 0$ for all $b' \in C$.

Define a projection matrix $\projM \in \R^{n \times q}$ such that $[\projM]^{b'} = (\featM_b - \featM_{b'}) \thetaV^*$. Then we have
\begin{equation}
    (\x^T - {\x^*}^T) \projM = \bar{\epsilon}
\end{equation}
where $\bar{\epsilon} = (\epsilon, \dots, \epsilon)$. 
If $q \geq n$, then $\projM$ has a right pseudoinverse. Rearranging, this gives 
\begin{subequations}
\begin{align}
    ||(\x^T - {\x^*}^T)|| &= ||\bar{\epsilon} \projM^{\dagger} ||\\
    &\leq \sqrt{q} \cdot \epsilon \cdot d \\
    &\leq \sqrt{m} \cdot \epsilon \cdot d
\end{align}
\end{subequations}
 since $m \leq q$ where $d = \frac{1}{\sigma_{\min}(\projM)}$. 

If instead $q < n$, then we make use of the following lemma.
\begin{lemma}\label{lem:aux}
Let $\projM \in \R^{n \times q}$ with $q < n$ with rank $q$. If $\x^T \projM = \bar{\epsilon}$, then there exists a full rank matrix $\projM' \in \R^{n \times n}$ with the first $q$ columns identical to $\projM$ such that $\x^T \projM' = (\bar{\epsilon}, \bar{0})$. Moreover, $\projM'$ can be constructed to have minimum singular value equal to the minimum singular value of $\projM$. 
\end{lemma}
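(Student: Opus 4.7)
The plan is to construct $\projM'$ explicitly by appending $n-q$ columns chosen to lie in the orthogonal complement of the column space of $\projM$, so that the augmented matrix has a convenient block structure. Concretely, I would first obtain an orthonormal basis $U_\perp \in \R^{n \times (n-q)}$ for $\mathrm{col}(\projM)^\perp$ (for instance via the compact SVD $\projM = U \Sigma V^T$, taking $U_\perp$ so that $[U \mid U_\perp]$ is orthogonal), and then set
\begin{equation*}
\projM' = \bigl[\,\projM \;\big|\; \sigma_{\min}(\projM) \cdot U_\perp\,\bigr].
\end{equation*}
By construction, the first $q$ columns agree with those of $\projM$, satisfying one of the required properties.

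Next, I would verify the full-rank and singular-value claims through a block-diagonal Gram-matrix argument. Because the appended columns are orthogonal to every column of $\projM$, mutually orthogonal, and all have norm $\sigma_{\min}(\projM)$, the matrix $(\projM')^T \projM'$ takes the block-diagonal form $\mathrm{diag}(\projM^T \projM,\ \sigma_{\min}(\projM)^2\, I_{n-q})$. Its eigenvalues are then the union of the eigenvalues of the two blocks, yielding $\sigma_{\min}(\projM')^2 = \sigma_{\min}(\projM)^2$, and $\projM'$ has full rank since its columns span $\mathrm{col}(\projM) \oplus \mathrm{col}(\projM)^\perp = \R^n$.

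The main obstacle, and where the argument needs care, is verifying $\x^T \projM' = (\bar{\epsilon}, \bar{0})$. The first $q$ coordinates follow directly from the hypothesis $\x^T \projM = \bar{\epsilon}$, but the remaining $n-q$ coordinates require $\x^T U_\perp = \bar{0}$, which amounts to $\x \in \mathrm{col}(\projM)$---a condition not implied by the hypothesis alone. To handle the general case I would decompose $\x = \x_\parallel + \x_\perp$ with $\x_\parallel \in \mathrm{col}(\projM)$ and $\x_\perp \in \mathrm{col}(\projM)^\perp$, then pick the orthonormal basis of $\mathrm{col}(\projM)^\perp$ so that $\x_\perp / \|\x_\perp\|$ is its first element (when $\x_\perp \neq 0$) and adjust that single column by a correcting term from $\mathrm{col}(\projM)$ so that it becomes orthogonal to $\x$, while the remaining $n-q-1$ columns remain in $\mathrm{col}(\projM)^\perp \cap \x_\perp^{\perp}$ and continue contributing orthogonal directions of size $\sigma_{\min}(\projM)$. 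Confirming that this correction preserves full rank and does not decrease the smallest singular value---via a Schur-complement analysis of the resulting $(\projM')^T \projM'$---is the delicate piece of the argument and is where I would expect to spend the bulk of the effort.
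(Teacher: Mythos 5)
Your construction is, in spirit, the same as the paper's: the paper appends columns one at a time, each chosen so that $\x^T\v=0$ and $\v$ lies outside the span of the existing columns, and then reads off full rank and the singular-value claim from a block-diagonal Gram matrix $\mathrm{diag}(\A^T\A,\ \v^T\v)$ --- which is exactly your argument in the case $\x\in\mathrm{col}(\projM)$, where your $U_\perp$ automatically lies in $\x^\perp$ and the block structure is legitimate. In that case your proof is complete and matches the paper's.

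The trouble is the case $\x\notin\mathrm{col}(\projM)$, which you correctly isolate as the crux and then defer. The deferred step cannot be completed: if $\x$ has a nonzero component outside $\mathrm{col}(\projM)$, then \emph{every} completion of $\projM$ to a square $\projM'$ whose appended columns are orthogonal to $\x$ satisfies $\sigma_{\min}(\projM')<\sigma_{\min}(\projM)$, so no correcting term, rescaling, or Schur-complement analysis can preserve the minimum singular value. Concretely, take $n=2$, $q=1$, $\projM=(1,0)^T$, $\x=(\epsilon,t)^T$ with $t\neq 0$, so $\x^T\projM=\bar{\epsilon}$ and $\sigma_{\min}(\projM)=1$. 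The appended column must be a scalar multiple $s(-t,\epsilon)^T$ of the unique direction orthogonal to $\x$, whence $\sigma_{\min}(\projM')=|\det\projM'|/\sigma_{\max}(\projM')\le |s\epsilon|/\bigl(|s|\sqrt{t^2+\epsilon^2}\bigr)=\epsilon/\sqrt{t^2+\epsilon^2}<1$. So the two requirements of the lemma ($\x^T\projM'=(\bar{\epsilon},\bar{0})$ and $\sigma_{\min}(\projM')=\sigma_{\min}(\projM)$) are jointly unachievable in this regime, and your proof has a genuine gap at precisely the step you flagged as delicate. For what it is worth, the paper's own proof hits the same obstruction --- its Gram-matrix display silently assumes $\A^T\v=\0$ in addition to $\x^T\v=0$, which is exactly the combination that fails when $\x\notin\mathrm{col}(\A)$ --- but judged on its own terms your argument is incomplete, and the missing piece is not merely technical: it is false as stated.
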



\noindent
Therefore, again we have that
\begin{subequations} \label{eq:pre_lem5}
\begin{align}
    ||\x^T - {\x^*}^T|| &= ||(\bar{\epsilon}, \bar{0}) (\M')^{-1} ||\\
    &\leq \sqrt{q} \cdot \epsilon \cdot d \\
    &\leq \sqrt{m} \cdot \epsilon \cdot d
\end{align}
\end{subequations}
where $d = \frac{1}{\sigma_{\min}(\M)}$ since $\opnorm{\A^{-1}} = \frac{1}{\sigma_{\min}(\A)}$ for any matrix $\A$.
This result gives rise to the following lemma.


\begin{lemma} \label{lemma:epsilon_diff}
For all $s \in \states$,
    \[V(\pi^{\star}, s) - V(\pi^{\epsilon}, s) \leq \epsilon d \sqrt{mn} (1 + \sqrt{nH}) H,\]
where $d$ is defined as above.
\end{lemma}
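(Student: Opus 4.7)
My plan is to establish the bound by backwards induction on the $H$ layers, leveraging at each layer the per-state perturbation bound derived just above in \eqref{eq:pre_lem5}. For a fixed state $s \in \states_h$, let $b^\star$ denote the follower action attaining the maximum in \eqref{eq:opt_policy} under the optimal mixed strategy $\x^\star_s := \pi^\star(s)$. Because the $\epsilon$-conservative feasible region sits strictly inside the non-strict feasible region of \eqref{eq:opt_policy}, the construction leading to \eqref{eq:pre_lem5} produces an $\epsilon$-feasible point $\x^\epsilon_s$ for the $b^\star$-branch of \eqref{eq:optimal_cons_opt} with $\|\x^\star_s - \x^\epsilon_s\|_2 \leq \sqrt{m}\,\epsilon\,d$. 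Since $V(\pi^\epsilon,s)$ is obtained by maximizing over \emph{all} $\epsilon$-conservative branches, the gap $V(\pi^\star,s) - V(\pi^\epsilon,s)$ is upper bounded by the one-step evaluation gap between $\x^\star_s$ and $\x^\epsilon_s$ under the fixed response $b^\star$, plus the propagated gap from the next layer.

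I would then expand this gap via the Bellman equation and decompose it as
\begin{equation*}
\Big(\E_{a\sim\x^\star_s}[r] - \E_{a\sim\x^\epsilon_s}[r]\Big) + \Big(\E_{a\sim\x^\star_s}\big[{\textstyle\sum_{s'}} P\,V(\pi^\star,s')\big] - \E_{a\sim\x^\epsilon_s}\big[{\textstyle\sum_{s'}} P\,V(\pi^\epsilon,s')\big]\Big).
\end{equation*}
Since $r \in [0,1]$, the first summand is a linear functional of $\x^\star_s - \x^\epsilon_s$ with coefficients in $[0,1]$ and is bounded by $\|\x^\star_s - \x^\epsilon_s\|_1 \leq \sqrt{n}\cdot\sqrt{m}\,\epsilon d = \sqrt{mn}\,\epsilon d$ using the norm inequality $\|\cdot\|_1 \leq \sqrt{n}\|\cdot\|_2$. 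For the second summand, I would add and subtract $\E_{a\sim\x^\epsilon_s}[\sum_{s'} P\,V(\pi^\star,s')]$ to split it into (i) another $\x$-perturbation term now weighted by the future-value vector $\sum_{s'}P\,V(\pi^\star,s') \leq H$, and (ii) a recursive term $\E_{a\sim\x^\epsilon_s}[\sum_{s'} P\,(V(\pi^\star,s') - V(\pi^\epsilon,s'))]$ that advances the induction to layer $h{+}1$ under the sub-stochastic transition.

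The main obstacle is accumulating these contributions across layers tightly enough to obtain the factor $(1 + \sqrt{nH})$ rather than the coarser $(1 + H)$ that a naive layer-by-layer bound would produce. Bounding each transition-induced perturbation pointwise by $H$ and then unrolling over $H$ layers yields only the cruder $O(\sqrt{mn}\,\epsilon d\cdot H^2)$; to improve this I would not bound the transition-perturbation terms layer by layer but aggregate them along the trajectory and apply Cauchy--Schwarz jointly across the $n$ actions and the $H$ layers, using that the sum-of-squares of the future-value entries along a trajectory is at most $nH\cdot H$. This trades one factor of $H$ for $\sqrt{nH}$ in the transition-induced contributions, so that the immediate-reward contributions telescope to $\sqrt{mn}\,\epsilon d\cdot H$ and the transition-induced contributions aggregate to $\sqrt{mn}\,\epsilon d\cdot\sqrt{nH}\cdot H$, which together give the claimed $\epsilon d\sqrt{mn}(1+\sqrt{nH})H$ bound. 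Throughout, the constant $d = 1/\sigma_{\min}(\M)$ together with the normalization $\opnorm{\M_b - \M_{b'}} \leq 1$ licensed by Proposition \ref{prop:scale_f} keeps the dimensional constants in check.
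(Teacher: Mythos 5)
Your skeleton---backwards induction over the layers, the Bellman decomposition with the add-and-subtract trick, Cauchy--Schwarz against the perturbation bound $\|\x^{\star}_s - \x^{\epsilon}_s\| \leq \sqrt{m}\,\epsilon\, d$ from \eqref{eq:pre_lem5}, and a reward-term contribution of $\epsilon d\sqrt{mn}$ per layer---matches the paper's proof. The divergence, and the gap, is in how you produce the $\sqrt{nH}$ factor on the transition-induced term. The paper obtains it entirely at the single-layer level: it writes the gap at a state $s$ as $(\x^T-{\x^*}^T)\r_{s,b} + \x^T(\v^{\epsilon}_{s,b}-\v^{*}_{s,b}) + (\x^T-{\x^*}^T)\v^{*}_{s,b}$, bounds the last term by $\|\x-\x^{*}\|\cdot\|\v^{*}_{s,b}\| \leq \epsilon d\sqrt{mn}\cdot\sqrt{nH}$ via a norm bound on the single future-value vector $\v^{*}_{s,b}$, lets the middle term carry the recursion to the next layer, and then unrolls, which simply multiplies the per-layer constant by $H$. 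No cross-layer aggregation appears anywhere.

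Your substitute for that step---a joint Cauchy--Schwarz across the $n$ actions and the $H$ layers resting on the claim that the sum of squares of the future-value entries along a trajectory is at most $nH\cdot H$---is where the argument fails. The layer-$h$ future-value vector has entries as large as $H-h$, so its squared norm can be $n(H-h)^2$, and the sum over $h$ is of order $nH^3$, not $nH^2$. Substituting the correct bound into your joint Cauchy--Schwarz gives $\sqrt{H\cdot m\,\epsilon^2 d^2}\cdot\sqrt{nH^3} = \epsilon d\sqrt{mn}\,H^2$, i.e., exactly the crude $H^2$ rate you set out to avoid: the aggregation buys nothing because the bottleneck is the magnitude of the early-layer value vectors, not the number of layers being summed. (The phrase ``along the trajectory'' is also loose, since the recursion runs over an expectation on a branching state process rather than a single path, but that is repairable; the quantitative claim is not.) To reach the stated bound you need the per-layer estimate on $\|\v^{*}_{s,b}\|$ that the paper invokes, after which the plain layer-by-layer unrolling already yields $\epsilon d\sqrt{mn}(1+\sqrt{nH})H$.
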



We are now ready to prove Theorem \ref{thm:main}.
\begin{proof}[Proof of Theorem \ref{thm:main}]

Recall that the regret $R_T$ is defined as
\begin{equation}
    R_T = \sum_{t=1}^T \Big( \actualV(\pi^{\star},s_1) - \actualV(\pi_t,s_1) \Big)
\end{equation}

By Lemma \ref{lemma:actual_value_bound} we have,
\begin{subequations}
\begin{align}
    R_T
    &\leq \sum_{t=1}^T \Big( \actualV(\pi^{\star},s_1) -  \left( \actualV(\pi^{\epsilon},s_1) - H \cdot \lambda_t(s_1) \right) \Big) \\
    &\leq \sum_{t=1}^T \Big( \actualV(\pi^{\star},s_1) - \actualV(\pi^{\epsilon},s_1) \Big) + \sum_{t=1}^T  H \cdot \lambda_t(s_1) \\
    &=\sum_{t=1}^T \Big( \actualV(\pi^{\star},s_1) - \actualV(\pi^{\epsilon},s) \Big) + H \sum_{t=1}^T \lambda_t(s_1) \label{eq:sources_of_regret}
\end{align}
\end{subequations}
Notice that Equation \eqref{eq:sources_of_regret} clearly shows the two sources of regret: from choosing an $\epsilon$-conservative policy and from the probability of making mistakes.
By Lemma \ref{lemma:epsilon_diff} we continue to get,
\begin{equation}
         R_T \leq T \Big( \epsilon d \sqrt{mn} (1 + \sqrt{nH}) H \Big) + H \sum_{t=1}^T \lambda_t(s_1).
\end{equation}
The final step is to use the fixed upper bound on the number of mistakes to give a \textit{probabilistic} upper bound on the cumulative probability of making a mistake.


Let $X_t$ be the indicator random variable for the event that $\pi_t$ makes a mistake during execution on step $t$ of the game. Let $\lambda_t = \lambda_t(s_1)$. By construction, $\E[X_t] = \lambda_t$. Notice that $\lambda_t$ is itself a random variable that depends on the outcomes of $X_1, X_2, \dots, X_{t-1}$.

Define the martingale $M_t = \sum_{i=1}^t (X_i - \lambda_i)$ with filtration $\mathcal{F}_t = \sigma(\{X_j, \lambda_j\}_{j=1}^t)$. The sequence $\{M_t\}$ is indeed a martingale since,
\begin{equation}
    \E[M_{t+1} \mid \mathcal{F}_t] = \E[X_{t+1} - \lambda_{t+1} \mid \mathcal{F}_t] + M_t = M_t
\end{equation}
where the final equality comes from the fact that, by construction, the expectation of $X_{t+1}$ is $\lambda_{t+1}$ after $X_1, \dots, X_t$ have been observed.
Since $|M_{t+1} - M_t| = |X_{t+1} - \lambda_{t+1}| \leq 1$, by Azuma's inequality \cite{chung2006concentration}, we have
\begin{equation} \label{eq:azuma}
    \pr\left[|M_t| \geq \alpha\right] \leq \exp
    \left(-\frac{\alpha^2}{2T}\right).
\end{equation}

From Lemma \ref{lemma:mistake_bound}, the actual number of mistakes $\sum_{t=1}^T X_t$ is upper bounded by $\left(\frac{2}{\epsilon}\right)^{p-1}$. Therefore,
\begin{equation}
    M_T = \sum_{i=1}^T X_i - \sum_{i=1}^T \lambda_i \leq \left(\frac{2}{\epsilon}\right)^{p-1} - \sum_{i=1}^T \lambda_i.
\end{equation}
Combining this with Equation \eqref{eq:azuma}, we obtain
\begin{equation}
    \pr \left[\sum_{i=1}^T \lambda_i \geq \left(\frac{2}{\epsilon}\right)^{p-1} + \alpha \right] \leq \exp
    \left(-\frac{\alpha^2}{2T}\right).
\end{equation}

Returning to the regret, this gives
\begin{equation}
         R_T \leq T \Big( \epsilon d \sqrt{mn} (1 + \sqrt{nH}) H \Big) + H \left(\left(\frac{2}{\epsilon}\right)^{p-1} + \alpha \right)
\end{equation}
with probability at least $1 - e^{-\frac{\alpha^2}{2T}}$.
Choosing $\epsilon = 2T^{-\frac{1}{p}}$ gives
\begin{subequations}
\begin{align}
    R_T &\leq 2d \sqrt{mn} (1 + \sqrt{nH}) H \left(T\right)^{1-\frac{1}{p}} + H 
    \left(T\right)^{1-\frac{1}{p}} + \alpha H \\
     &\leq 2\left(T\right)^{1-\frac{1}{p}} \left(d \sqrt{mn} (1 + \sqrt{nH}) H + 
     H\right) + \alpha H
\end{align}
\end{subequations}
with probability at least $1 - e^{-\frac{\alpha^2}{2T}}$. Taking $\alpha = \beta \sqrt{T}$ gives
\begin{equation}
    R_T \leq \left(T\right)^{1-\frac{1}{p}} \left(2 d \sqrt{mn} (1 + \sqrt{nH}) H + H\right) + \beta \sqrt{T} H
\end{equation}
with probability at least $1 - e^{-\frac{\beta^2}{2}}$ for any choice of $\beta$.
Letting $\delta = e^{-\frac{\beta^2}{2}}$ we can get the alternate representation
\begin{equation}
\begin{aligned}
    R_T &\leq \left(T\right)^{1-\frac{1}{p}} \left(2 d \sqrt{mn} (1 + \sqrt{nH}) H + H\right) \\&\qquad+  H\sqrt{2T\ln \left(\frac{1}{\delta}\right)}
    \end{aligned}
\end{equation}
with probability at least $1 - \delta$.
\end{proof}
\subsection{Anytime sublinear regret for DSGs}
While we showed in the previous section that the learning scheme outlined in Fig. \ref{procedure:main} enjoys a high probability sublinear regret, the algorithm relies on knowing the number of episodes $T$. In what follows, using the \textit{doubling trick} \cite{auer1995gambling}, we show how to adapt the learning scheme into \textit{an any-time algorithm} that does not require knowing $T$ in advance. That is, even if the horizon $T$ is unknown, the adapted version of our learning scheme will achieve the same regret bound as the standard learning scheme, with probability $1-\delta$, for any confidence parameter $\delta>0$.

The adapted learning scheme, outlined in Algorithm \ref{alg:anytime}, operates by learning over increasingly large time segments $\{T_i\}$, starting from an initial segment $T_0$. To make our analysis simpler, we first consider the scheme in which progress made in a segment $T_j$ is discarded once segment $T_{j+1}$ begins.

Let $T_i = 2^i$ and $\Ttrue = \sum_{i=0}^n T_i$. $CR(t)$ is the cumulative regret of Algorithm \ref{alg:anytime} at time $t$ and $R(t)$ is the regret of the learning scheme outline in Fig. \ref{procedure:main} [cf. Theorem \ref{thm:main}]. Then, it holds with probability at least $(1-\delta)^n$,
\begin{subequations}
\begin{align}
    CR&(\Ttrue) = \sum_{i=0}^n R_{T_i} \\
    \leq& \sum_{i=0}^n \left(T_i \right)^{1-\frac{1}{p}} \left(d \sqrt{mn} (1 + \sqrt{nH}) H + c^{p-1} H\right)\\&\qquad +  H\sqrt{\frac{T_i}{2}\ln \left(\frac{1}{\delta}\right)}\\
    =& \left(d \sqrt{mn} (1 + \sqrt{nH}) H + c^{p-1} H\right) \sum_{i=0}^n \left(T_i \right)^{1-\frac{1}{p}} \\&\qquad +   H\sqrt{\frac{1}{2}\ln \left(\frac{1}{\delta}\right)} \sum_{i=0}^n \sqrt{T_i} \\
\end{align}
\end{subequations}
Then, using a simple geometric sum formula, one can establish that $CR(\Ttrue) = \Tilde{\mathcal{O}}(\Ttrue^{1-\frac{1}{p}})$, with probability at least $(1-\delta)^n\geq 1-n \delta$. Therefore, by re-scaling $\delta$ to $\delta/n$ we establish the intended result.

\begin{algorithm}[t]
  \caption{An any-time version of the learning scheme.
    \label{alg:anytime}}
  \begin{algorithmic}[1]
    \Require{Dynamic Stackelberg game $(\states, \leadacts, \followacts, r, u, P)$.}
    \State Initialize $T_0$.
    \For{$i \in \{0,\dots,n\}$}
        \State $T_i \gets 2^i T_0$
        \State $\textproc{Learn}(T_i)$
    \EndFor
  \end{algorithmic}
\end{algorithm}

Now we consider the case in which progress in the learning scheme is carried over between time segments. In this case, instead of discarding the halfspaces accumulated during previous segments, we continue to use those halfspaces to restrict the possible value of $\thetaV^{*}$. The cumulative regret incurred during segment $T_i$ is then upper bounded by $R_{T_i}$ since the volume of the version space $\Theta$ when starting segment $T_i$ is at most the volume of the initial version space $\Theta_0$. In fact, it is likely smaller, since mistakes made in previous segments would have incurred new halfspaces that shrink the size of $\Theta$ even before segment $T_i$ begins.

\section{Experimental Results} \label{sec:experimental}

In this section, we present experimental results on the regret that our algorithm incurs over discrete-time dynamic Stackelberg games (DSGs) with varying parameters. First we discuss results related to the poaching example from the introduction, and then we give results averaged over randomly generated DSG instances to show performance as parameters of the DSG vary. Our implementation uses Gurobi \cite{gurobi} to compute solutions to the quadratic program in Equation \eqref{eq:cons_opt}.

To demonstrate the performance of our algorithm, we report the \textit{average regret} over time. At time step $t$, the average regret is calculated as 
\begin{equation}
    \frac{1}{tH}\sum_{i = 1}^t \sum_{h = 1}^H R^{\pi^{\star}}_{i,h} - R^{\pi_i}_{i,h}
\end{equation}
where $R^{\pi^{\star}}_{i,h}$ and $R^{\pi_i}_{i,h}$ is the reward that the optimal policy and learning policy, respectively, receives at step $h$ of episode $i$.
To compare the performance of multiple different policies, we also report the \textit{average cumulative reward} over time. At time step $t$, the average cumulative reward is calculated as 
\begin{equation}
    \frac{1}{tH}\sum_{i = 1}^t \sum_{h = 1}^H R^{\pi_i}_{i,h}
\end{equation}
where $R^{\pi_i}_{i,h}$ is the reward that policy $\pi_i$ receives at step $h$ of episode $i$.

To the best of the authors' knowledge, no other algorithms exist for directly solving Problem \ref{prb:learning}. However, by disregarding the reward and transition structure induced by the Stackelberg game, a discrete-time dynamic Stackelberg game can be reduced to a Markov decision process (MDP) with an unknown reward and transition functions in the following way.

Let $(\states, \leadacts, \followacts, r, u, P)$ be a discrete-time dynamic Stackelberg game.
Consider the auxilliary reward function $R : \Delta(\leadacts) \times \states \to \R$ defined in Equation \eqref{eq:aux_reward} induced by the follower's policy.
An auxiliary transition function $P' : \states \times \Delta(\leadacts) \times \states$ can be defined in a similar way as
\begin{equation} \label{eq:aux_trans}
    P'(s, \x, s') = \E_{a \samp \x} \left[ p(s, a, \varphi(\x), s) \right].
\end{equation}
Therefore, the game can be reduced to the MDP with parameters $(\states, \Delta(\leadacts), R, P')$. Notice that the action space is the continuous space of mixed policies and that if the follower's utility function is unknown, then $R$ and $P'$ are also unknown.

By disregarding the reward and transition structure given by $R$ and $P'$, respectively, learning in this setting can be done with reinforcement learning. Q-learning is a popular model-free reinforcement learning algorithm suitable for this scenario. Since the action space of the resulting MDP is continuous, we discretize it and use a tabular implementation of Q-learning. 

\subsection{Poaching Example}

\begin{figure}[t]
    \centering
    \input{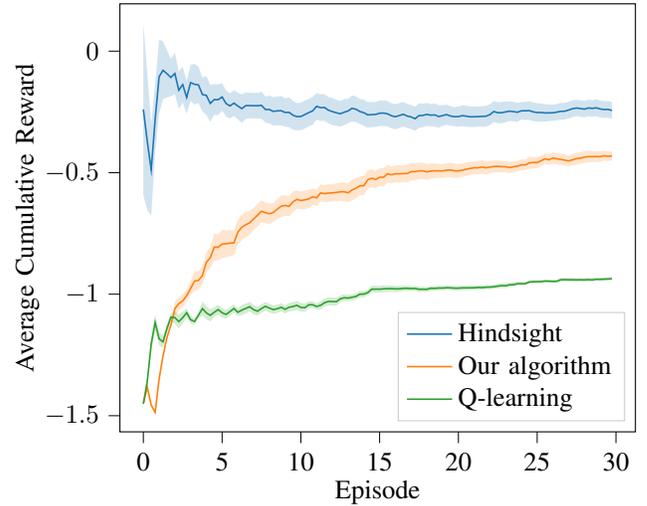}
    \caption{The average regret over time of the learning scheme, a random policy, and the best policy in hindsight on an instance of the poaching example. The best policy in hindsight is the optimal policy had the follower's reward function been known ahead of time.}
    \label{fig:poaching_comparison}
\end{figure}

Recall the motivating example from Section \ref{sec:introduction}: park rangers are responsible for protecting a geographical area containing different types of animals from illegal poachers. The geographical area is split into distinct regions, each of which contains a different density of the animals.  Each month, the park rangers can deploy a (mixed) policy to decide which of the regions to patrol. After observing the rangers' policy, the poachers can attempt to lay snares in any one of the regions.

The park rangers may not know exactly which types of animals the poachers are attempting to poach. Therefore, we model the poachers reward function as an unknown linear combination of the probability of poaching each type of animal with a penalty if the poacher is caught.
Let $D: [N] \to [0,1]^M$ represent a density function such that $[D(n)]_m$ represents the density of animal $m$ in subregion $n$. The poacher's utility function can therefore be described as
\begin{equation}
    u(a,b) = \langle \feat(a,b) , \thetaV^* \rangle
\end{equation}
with unknown weight vector $\thetaV^*$ and known feature function
\begin{equation}
    f(a,b) = \begin{cases}
    (0,\dots,0,-1) & \text{if $a = b$;} \\
    (D(b), 0)  & \text{otherwise.}
    \end{cases}
\end{equation}
The last index in $f(a,b)$ describes whether or not the poacher is caught by the part rangers. Therefore, the last index in $\thetaV^*$ describes the severity of the ranger getting caught and the first $M$ indices describe the payoff for successfully poaching different types of animals.

The park ranger's reward function correlated directly with the fluctuating severity of the various animals being poached throughout the year. Let $C: [12] \to [0,1]^M$ represent the severity of poaching each of the $M$ types of animals per month.
Let $s = (s_1, s_2)$ represent the park ranger's state where $s_1$ is the current month and $s_2$ is the remaining budget. Then, the park rangers' reward function is
\begin{equation}
    r(s,a,b) = \begin{cases}
    c & \text{if $a = b$;} \\
    -C(s_1) \cdot D(b) & \text{otherwise.}
    \end{cases}
\end{equation}
where $c$ is a constant describing the value of catching and arresting a poacher.
The set of actions $\leadacts(s)$ available to the park rangers from state $s$ are the regions that keep the park rangers under budget.

We explore the performance of our learning scheme on an instance of the poaching example with a budget of six, a horizon of four, and three different types of animals (giving $p=4$). The park is split into four different regions, two of which cost one unit of the budget to patrol, and two that cost two units of budget to patrol. The poaching density for each type of animal is randomly chosen in each region, and the poaching severity for each type of animal is randomly chosen for each region.
The poacher's preferences $\thetaV^*$ is randomly chosen and unknown to the leader. 

Fig. \ref{fig:poaching_comparison} shows a comparison of the performance of our learning scheme against the best policy in hindsight and a discretized implementation of Q-learning over a time horizon of 30. Since policies in the poaching example are stochastic, the results in Fig. \ref{fig:poaching_comparison} are averaged over 10 samples.

\begin{figure}[t]
    \centering
    \input{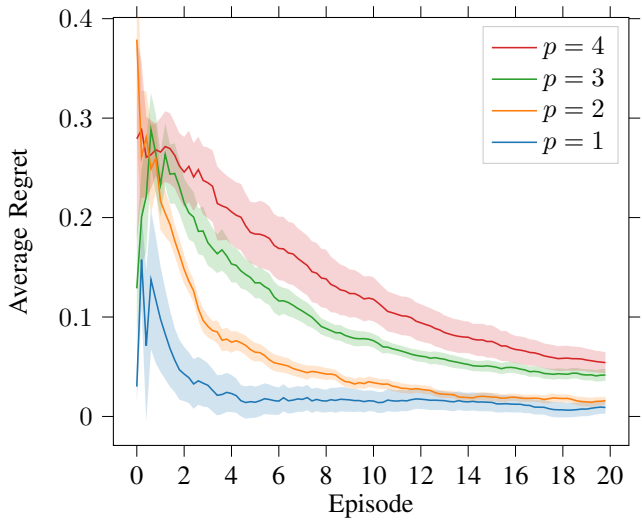}
    \caption{The average regret for different numbers $p$ of features representing the follower's utility function.}
    \label{fig:varying_p}
\end{figure}

\subsection{Randomly Generated Instances}

Each of the randomly generated DSG instances has four actions available for both the leader and follower ($n = 4$ and $m = 4$). In order to report the average performance in different scenarios, numerical values in the game are generated randomly and results are averaged over ten different games. In particular, rewards and transitions between layers in in the state space are generated uniformly at random. The state spaces resemble a tree, with an increasing number of states in each layer.

Fig. \ref{fig:varying_p} shows the average regret that our algorithm incurs over a horizon of $T=20$ with varying dimensions $p$ of the follower parameter space. As outlined above, the numerical values are generated randomly and the state space has $(1,2,4,8,16)$ states in each layer. In agreement with the theoretical analysis, the asymptotic behavior of the learning agent approaches the optimal policy more slowly with a larger value of $p$.

\begin{figure}[t]
    \centering
    \input{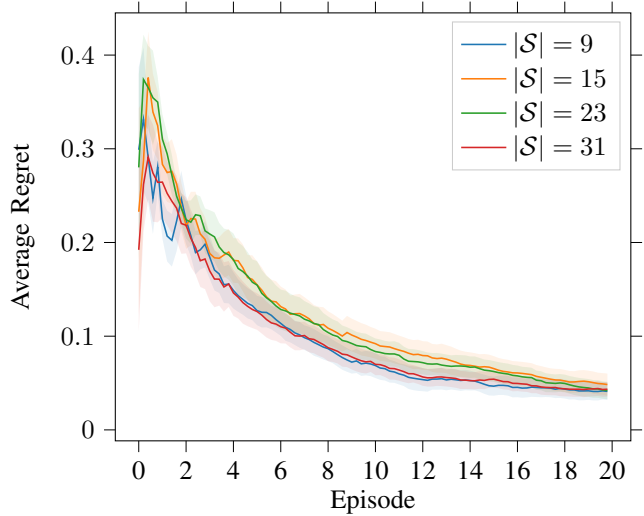}
    \caption{The average regret for different size state spaces $\states$. Table \ref{fig:chart} shows the number of states in each layer of the different state spaces.}
    \label{fig:varying_s}
\end{figure}

\definecolor{color0}{rgb}{0.12156862745098,0.466666666666667,0.705882352941177}
\definecolor{color1}{rgb}{1,0.498039215686275,0.0549019607843137}
\definecolor{color2}{rgb}{0.172549019607843,0.627450980392157,0.172549019607843}
\definecolor{color3}{rgb}{0.83921568627451,0.152941176470588,0.156862745098039}
\definecolor{color4}{rgb}{0.580392156862745,0.403921568627451,0.741176470588235}

\begin{table}[t]\label{fig:chart}
\centering
\begin{tabular}{|c|c|c|c|c|c|}
\hline
& $\states_1$ & $\states_2$ & $\states_3$ & $\states_4$ & $\states_5$ \\
\hline
\tikz\draw[color0,fill=color0] (0,0) circle (.5ex); & 1 & 2 & 2 & 2 & 2 \\
 \hline
\tikz\draw[color1,fill=color1] (0,0) circle (.5ex); & 1 & 2 & 4 & 4 & 4 \\
 \hline
\tikz\draw[color2,fill=color2] (0,0) circle (.5ex); & 1 & 2 & 4 & 8 & 8 \\
 \hline
\tikz\draw[color3,fill=color3] (0,0) circle (.5ex); & 1 & 2 & 4 & 8 & 16 \\
 \hline
\end{tabular}
\vspace{5mm}
\caption{The number of states in each layer of a DSG instance having four actions available for both the leader and follower ($n = 4$ and $m = 4$).}
\end{table}

\begin{figure}[t]
    \centering
    \input{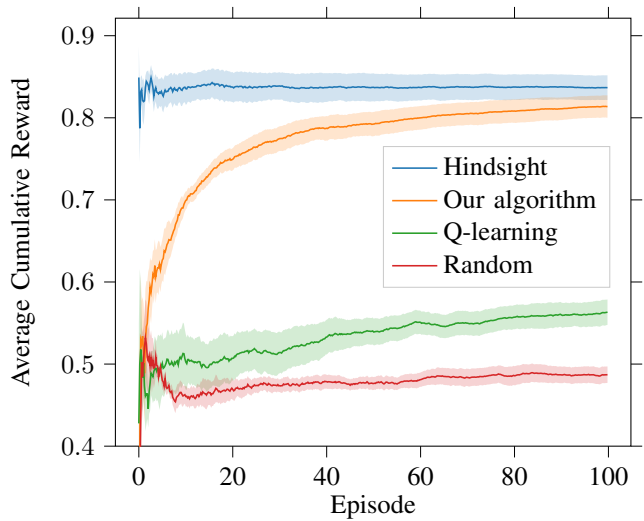}
    \caption{The average cumulative reward for different types of policies. Results are averaged over randomly generated DSGs with 9 states.}
    \label{fig:q_comparison}
\end{figure}

Fig. \ref{fig:varying_s} shows the average regret that our algorithm incurs over a horizon of $T=20$ with varying size state spaces but fixed $p=4$.
The number of states in each layer is varied according to the  chart in Table \ref{fig:chart}.
In agreement with the theoretical analysis, the plot indicates that the regret of our algorithm is independent of the size of the state space.

Fig. \ref{fig:q_comparison} compares the average cumulative reward between the best policy in hindsight, our algorithm, Q-learning, and a policy that randomly chooses mixed policies. The Q-learning implementation discretizes the action space into mixed strategies of the form $\left(\frac{z_1}{10},\frac{z_2}{10},\frac{z_3}{10},\frac{z_4}{10}\right)$ for integers $z_1, z_2, z_3, z_4 \in \mathbb{Z}$ such that $\sum_i z_i = 10$. The state space is partitioned into layers of size $(1,2,2,2,2)$. 
As the figure demonstrates, Q-learning significantly under-performs in comparison to our algorithm that takes advantage of the knowledge of the game's reward structure. Larger state spaces lead to even greater discrepancies in performance since the amount of time it takes Q-learning to learn the structure of the induced MDP scales with the size of the state space while our algorithm provably does not.

\section{Conclusions}

We introduce a new class of game, called a discrete-time dynamic Stackelberg game (DSG), by unifying the sequential decision making aspects of a Markov decision process and the asymmetric interaction between two strategic agents from a Stackelberg game. DSGs extend standard repeated Stackelberg games to scenarios with dynamic reward structures and action spaces. We study DSGs in an online learning setting and give a novel no-regret learning algorithm for playing in a DSG against a follower with unknown utility function. 
Experimental results show that even in practice, our algorithm outperforms model-free reinforcement learning approaches for solving DSGs. Moreover, our algorithm achieves regret independent of the size of the state space, providing a scalable solution in environments with large state spaces. 

\subsection{Future Work}

\textit{a) Tightness of the analysis:} The most limiting aspect of our result seems to be the nature of the regret's dependence on the dimension of $\thetaV^*$, the parameter of the follower's utility function. Lemma \ref{lemma:mistake_bound} achieves an upper bound on the number of mistakes that is linear in the volume of the initial version space of $\thetaV$. A closely related problem to this subroutine of the main algorithm is the problem of exactly learning an unknown halfspace through query synthesis for which there exist algorithms that achieve sample complexity that is logarithmic in the volume of the initial version space \cite{alabdu2015}. Bridging the gap between the sample efficiency of our subroutine and the efficiency of this closely related problem could be a fruitful method for improving the regret bound of our result.

\textit{b) Function class of the follower's utility:}
Our algorithm relies on the fact that the follower's utility function is linearly parameterized. Without this assumption, estimating the follower's utility function is challenging. 
For any prior observation $(\x, b)$, we know that $\forall b' \in \followacts$,
\begin{subequations}
\begin{align}
    \E_{a \samp \x} [u(a, b)] & \geq  \E_{a \samp \x} [u(a, b')] \\
    \langle \x, \thetaV_b \rangle & \geq \langle \x, \thetaV_{b'} \rangle
\end{align}
\end{subequations}
for some vectors $\thetaV_b$.
Estimating the decision boundaries induced by the vectors $\thetaV_b$ from queries
requires exactly learning this multi-class classification problem. Solving the posed learning problem for discrete-time dynamic Stackelberg games with a generalized follower utility function would likely require incorporating some form of online learning algorithm for the multi-class classification problem.

\textit{c) Computational complexity:} In order to compute a policy for each episode, our algorithm requires solving $|\states| \cdot |\followacts|$ copies of the nonconvex quadratic program in Equation \eqref{eq:cons_opt}. In general, solving a nonconvex quadratic program is NP-hard which could pose computational limitations when scaling to larger problem instances. Designing more computationally efficient algorithms for Problem \ref{prb:learning} could be a useful direction for future research.

\bibliographystyle{IEEEtran}
\bibliography{references}

\begin{appendices}

\section*{Proof of Proposition \ref{prop:scale_f}}
    Let $c \in \R$ and define an augmented feature function $f'(a,b) = c \cdot f(a,b)$. A corresponding augmented utility function for the follower can be defined as 
    \begin{equation}
            u'(a,b) = \langle \feat'(a,b) , \thetaV^* \rangle
            = c \cdot u(a,b).
    \end{equation}
    The follower's corresponding augmented policy is therefore 
    \begin{equation}
     \begin{aligned}
    \varphi'(\x) &= \argmax_{b \in \followacts} \E_{a \samp \x} \left[ c \cdot u(a, b) \right] \\
    &= \argmax_{b \in \followacts} \E_{a \samp \x} c \cdot \left[ u(a, b) \right]
    = \varphi(\x),
        \end{aligned}
\end{equation}
identical to the original utility function.

\section*{Proof of Lemma \ref{lemma:mistake_shrink}}
    From Equation \eqref{eq:cons_opt}, we see that
    $\x_s^T(\featM_{b_s} - \featM_{b^*}) \thetaV_s \geq \epsilon$.
    Since $b^*$ was the optimal choice for the follower, we have $\x_s^T(\featM_{b_s} - \featM_{b^*}) \thetaV_s^* \leq 0$. Combining these, we have the following.
    \begin{align*}
        \x_s^T(\featM_{b_s} - \featM_{b^*}) \thetaV_s - \x_s^T(\featM_{b_s} - \featM_{b^*}) \thetaV^* &\geq \epsilon \\
        \x_s^T (\featM_{b_s} - \featM_{b^*}) (\thetaV_s - \thetaV^*) &\stackrel{(a)}{\geq} \epsilon \\
        ||\x_s^T (\featM_{b_s} - \featM_{b^*})|| \cdot ||\thetaV_s - \thetaV^*|| &\geq \epsilon \\
        ||\x_s^T|| \cdot ||\thetaV_s - \thetaV^*|| & \stackrel{(b)}{\geq} \epsilon \\
        ||\thetaV_s - \thetaV^*|| &\stackrel{(c)}{\geq} \epsilon
    \end{align*}
where step $(a)$ is by Cauchy-Schwartz inequality, step $(b)$ follows from Proposition \ref{prop:scale_f} and step $(c)$ follows from the fact that $\x_s^T$ is a probability vector.
Therefore, $\thetaV^* \not\in B_{\epsilon}(\thetaV_s)$ and the proof is complete.

\section*{Proof of Lemma \ref{lemma:mistake_bound}}

Since we shrink $\Theta$ by at least $\Theta \cap B_{\epsilon}(\thetaV_s)$ with mistake $(\x_s, \thetaV_s, b_s)$, we know that any future mistakes cannot fall within the ball $B_{\epsilon}(\thetaV_s)$. Let $\thetaV_1$ and $\thetaV_2$ be two estimates for the halfspace $\thetaV^*$ that induced a mistake during a previous observation. It must be that $B_{\frac{\epsilon}{2}}(\thetaV_1) \cap B_{\frac{\epsilon}{2}}(\thetaV_2) = \emptyset$. So, the number of mistakes is upper-bounded by the number of $\frac{\epsilon}{2}$-balls that can fit in the initial space $\Theta_0 = \{\thetaV \in \R^p \mid ||\thetaV|| = 1\} = \{\thetaV \in \R^{p-1} \mid ||\thetaV|| \leq 1\}$. Let $V_{p-1}(r)$ denote the $(p-1)$-dimensional volume of a $(p-1)$-sphere with radius $r$. An upper bound on the number of mistakes is therefore given by the ratio between the volumes
\begin{equation}
    \frac
    {\text{V}_{p-1}(1)}
    {\text{V}_{p-1}(\frac{\epsilon}{2})}
    = \frac{\alpha}
    {\alpha \left(\frac{\epsilon}{2}\right)^{p-1}}
    = \left(\frac{2}{\epsilon} \right)^{p-1}
\end{equation}
where $\alpha = \frac{\pi^{(\frac{p-1}{2})}}{\Gamma(\frac{p-1}{2} + 1)}$ and $\Gamma$ is the gamma function.

\section*{Proof of Lemma \ref{lemma:actual_value_bound}}

    The inequality $\actualV(\pi^{\star}, s) \geq V(\pi_t, s)$ is obvious since $\pi^{\star}$ is the optimal policy.
    Let $\tau \samp \pi_t(s)$ represent a \textit{trace}, i.e. a sequence of state-action pairs sampled from following policy $\pi_t$ beginning in state $s$. Let $\traceV(\tau)$ represent the actual reward obtained from $\tau$ and let $\estV_t(\tau)$ represent the \textit{estimated reward} obtained from $\tau$, i.e. the value that the $\estV_t(s)$ expects outcome $\tau$ to achieve. Notice that $\estV_t(\tau) = \actualV(\tau)$ if $\tau$ does not make a mistake. Then $\actualV(\pi_t, s) = \E_{\tau \samp \pi_t(s)} [V(\tau)]$. Let $\mistake(\tau)$ be a Boolean value that is true if and only if $\tau$ does not make a mistake under the learning policy. Then we have the following.
    \begin{subequations}
    \begin{align}
        \actualV(\pi_t, s) 
        &= \E_{\tau \samp \pi_t(s)} \left[ \actualV(\tau) \mid \text{$\mistake(\tau)$} \right] \cdot \pr \left[ \text{$\mistake(\tau)$} \right]\\&\qquad +  \E_{\tau \samp \pi_t(s)} \left[ \actualV(\tau) \mid \text{$\lnot \mistake(\tau)$} \right] \cdot \pr \left[ \text{$\lnot \mistake(\tau)$} \right] \\
        &= \E_{\tau \samp \pi_t(s)} \left[ \actualV(\tau) \mid \text{$\mistake(\tau)$} \right] (1-\lambda_t(s)) \\
        &\qquad+  \E_{\tau \samp \pi_t(s)} \left[ \actualV(\tau) \mid \text{$\lnot \mistake(\tau)$} \right] \lambda_t(s) \\
        &\stackrel{(a)}{\geq} \E_{\tau \samp \pi_t(s)} \left[ \actualV(\tau) \mid \text{$\mistake(\tau)$} \right] (1-\lambda_t(s)) \\
        &\stackrel{(b)}{=} \E_{\tau \samp \pi_t(s)} \left[ \estV_t(\tau) \mid \text{$\mistake(\tau)$} \right] (1-\lambda_t(s)) \\
        &= \E_{\tau \samp \pi_t(s)} \left[ \estV_t(\tau) \right] - \E_{\tau \samp \pi_t(s)} \left[ \estV_t(\tau) \mid \lnot \text{$\mistake(\tau)$} \right] \lambda_t(s) \\
        &\geq \E_{\tau \samp \pi_t(s)} \left[ \estV_t(\tau) \right] - H \cdot \lambda_t(s) \\
        &= \estV_t(s) - H \cdot \lambda_t(s) \\
        &\stackrel{(c)}{\geq} \actualV(\pi^{\epsilon}) - H \cdot \lambda_t(s).
    \end{align}
    \end{subequations}
    The inequality (a) holds because rewards are normalized to be positive, the equality (b) comes from the fact that $\estV_t(\tau) = \actualV(\tau)$ on traces $\tau$ that the learning policy doesn't make a mistake on, and (c) comes from the fact that $\thetaV^* \in \Theta$ so $\estV_t(s) \geq \actualV(\pi^{\epsilon}, s)$.

\section*{Proof of Lemma \ref{lem:aux}}

Let $\A = \begin{pmatrix} \v_1 & \dots & \v_q \end{pmatrix}$. We just need to show that there exists some $\v$ such that $\x^T \v = 0$ and $\v \not\in \spn\{v_1, ..., \v_q\}$. We will also show $v$ can be chosen such that the resulting matrix $\A' = \begin{pmatrix} \A & \v \end{pmatrix}$ has $\sigma_{\min}(\A') = \sigma_{\min}(\A)$. Then, through repeated application of this result, the lemma holds. 

Let $\stcomp{\x}$ denote the complement space of $\x$.
For sake of contradiction, suppose no such vector exists. Then $\stcomp{\x} \subset \spn\{\v_1,\dots,\v_q\}$. But since $\dim(\stcomp{\x}) = n-1$, it must be that $\x \in \ker\{\v_1, \dots, \v_q\}$ since the kernel space is nontrivial. But this is a contradiction, since $\x^T \M \neq \0$ by definition.

So, there exists $\v$ such that that $\x^T \v = 0$ and $\v \not\in \spn\{\v_1, ..., \v_q\}$. Let $\A' = \begin{pmatrix} \A & \v \end{pmatrix}$. Then,
\begin{equation}
\begin{aligned}
    (\A')^T \A' &= \begin{pmatrix} \A^T \\ \v^T
                \end{pmatrix}
                \begin{pmatrix} \A & \v
                \end{pmatrix}
               = \begin{pmatrix} \A^T \A & \0 \\ \0 & \v^T\ \v
                \end{pmatrix}
\end{aligned}
\end{equation}
Since the singular values of $\A'$ are the square roots of the eigenvalues of $(\A')^T \A'$ we have that $\sigma_{\min}(\A') = \min( \sigma_{\min}(\A), ||\v||)$. Since $||\v||$ can be arbitrarily chosen, we can construct $\A'$ such that $\sigma_{\min}(\A') = \sigma_{\min}(\A)$ holds.


\section*{Proof of Lemma \ref{lemma:epsilon_diff}}

Define vectors $\v_{s,b}^*, \v_{s,b}^{\epsilon} \in \R^{n}$ such that
\begin{equation*}
\begin{aligned}
    [\v_{s,b}^*]_a &= \sum\limits_{s'} P(s,a,b,s') V(\pi^{\star}, s'), \; \text{and}\\
    [\v_{\epsilon, s,b}^*]_a &= \sum\limits_{s'} P(s,a,b,s') V(\pi^{\epsilon}, s'),
\end{aligned}
\end{equation*}
and vector $\r_{s,b} \in \R^n$ such that 
\begin{equation*}
    [\r_{s,b}]_a = r(s,a,b).
\end{equation*}
We have that $||\x^T - {\x^*}^T|| \leq \sqrt{m} \cdot \epsilon \cdot d$ from Equation \eqref{eq:pre_lem5} which in turn gives the following.
\begin{subequations}
\begin{align}
    V&(\pi^{\star}, s) - V(\pi^{\epsilon}, s) \\
    &= \x^T\left(\r_{s,b} + \v_{s,b}^{\epsilon} \right)  - {\x^*}^T \left(\r_{s,b} + \v^*_{s,b} \right)  \\
    &= (\x^T  - {\x^*}^T) \r_{s,b} + \x^T \v_{s,b}^{\epsilon} - {\x^*}^T \v^*_{s,b} \\
    &= (\x^T  - {\x^*}^T) \r_{s,b} + (\x^T \v^*_{s,b} - \x^T \v^*_{s,b}) \\&\qquad+ \x^T \v_{s,b}^{\epsilon} - {\x^*}^T \v^*_{s,b} \\
    &= (\x^T  - {\x^*}^T) \r_{s,b} + \x^T (\v_{s,b}^{\epsilon} - \v^*_{s,b}) + (\x^T - {\x^*}^T) \v^*_{s,b} \\
    & \leq ||\x^T - {\x^*}^T|| \cdot ||\r_{s,b}||  + ||\x|| \cdot ||\v_{s,b}^{\epsilon} - \v^*_{s,b}||\\&\qquad + ||\x^T - {\x^*}^T|| \cdot ||\v^*_{s,b}||\\
    &\leq \epsilon \cdot d \cdot \sqrt{m} \cdot ||\r_{s,b}||  + \epsilon \cdot d \cdot \sqrt{mn} \cdot \sqrt{nH} + ||\v_{s,b}^{\epsilon} - \v^*_{s,b}|| \\
    &\leq \epsilon \cdot d \cdot \sqrt{mn} + \epsilon \cdot d \cdot \sqrt{mn} \cdot \sqrt{nH} + ||\v_{s,b}^{\epsilon} - \v^*_{s,b}|| \\
    &\leq \epsilon d \sqrt{mn} (1 + \sqrt{nH}) + ||\v_{s,b}^{\epsilon} - \v^*_{s,b}|| 
\end{align}
\end{subequations}
since $\r_{s,b} \in [0,1]^n$ and $\v^*_{s,b} \in [0,H-1]^n$. $||\v_{s,b}^{\epsilon} - \v^*_{s,b}||$ is upper bounded by the maximum difference $\max_{s' \in \states_{l+1}}\{V(\pi^{\epsilon}, s') - V(\pi^{\star}, s')\}$ over valuations in the next layer and $V(\pi^{\epsilon}, s') = V(\pi^{\star}, s') = 0$ for $s' \in \states_H$. Therefore, for any $s \in \states$, recursive application of this computation gives the desired bound 
\begin{equation}
V(\pi^{\star}, s) - V(\pi^{\epsilon}, s)\leq (\epsilon d  \sqrt{mn} (1 + \sqrt{nH}))H. 
\end{equation}

\end{appendices}


\begin{IEEEbiography}[{\includegraphics[width=1in,height=1.25in,clip,keepaspectratio]{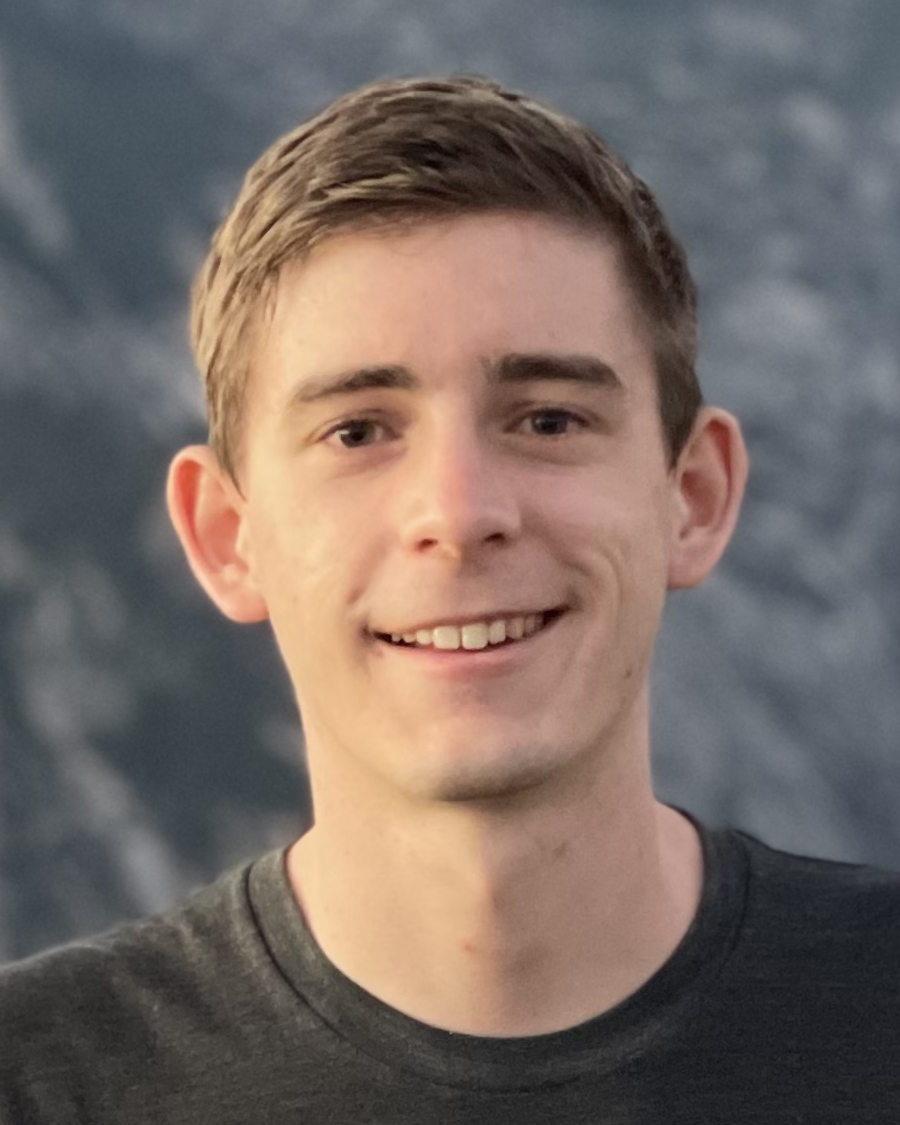}}]{Niklas Lauffer} is a PhD student in the Department of Electrical Engineering and Computer Sciences at the University of California, Berkeley. He received his B.S. degree in Computer Science and Math from the University of Texas at Austin in 2021. His research focuses on developing autonomous learning and decision making systems that are provably safe and beneficial by employing ideas from game theory, formal methods, and learning theory.
\end{IEEEbiography}

\vskip -0.1\baselineskip plus -1fil

\begin{IEEEbiography}[{\includegraphics[width=1in,height=1.25in,clip,keepaspectratio]{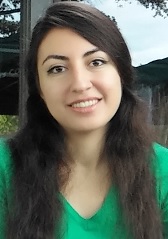}}]{Mahsa Ghasemi}
is an Assistant Professor at the School of Electrical and Computer Engineering at Purdue University. Her research focuses on theoretical and foundational advancements preparing autonomous systems to co-exist with humans in our complex world. Her contributions enable efficient and reliable integration of autonomy in various applications such as robotics, shared autonomy, and networked systems. 
She received her B.Sc. degree in Mechanical Engineering from Sharif University of Technology, and her M.S.E. and Ph.D. degrees in Mechanical Engineering and Electrical and Computer Engineering, respectively, from The University of Texas at Austin. 
\end{IEEEbiography}

\vskip -0.1\baselineskip plus -1fil

\begin{IEEEbiography}[{\includegraphics[width=1in,height=1.25in,clip,keepaspectratio]{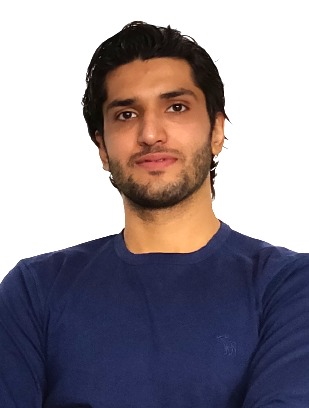}}]{Abolfazl Hashemi}
is an Assistant Professor at the School of Electrical and Computer Engineering at Purdue University. His research goal is to enhance the performance and capabilities of the networked systems characterized by limited communication budgets and data scarcity. Abolfazl received his Ph.D. and M.S.E. degrees in the Electrical and Computer Engineering department at UT Austin in 2020 and 2016. Before that, He received his B.Sc. degree in Electrical Engineering from the Sharif University of Technology in 2014. He was the recipient of the Iranian national elite foundation fellowship and a best student paper award finalist at the 2018 American Control Conference.
\end{IEEEbiography}

\vskip -0.1\baselineskip plus -1fil

\begin{IEEEbiography}[{\includegraphics[width=1in,height=1.25in,clip,keepaspectratio]{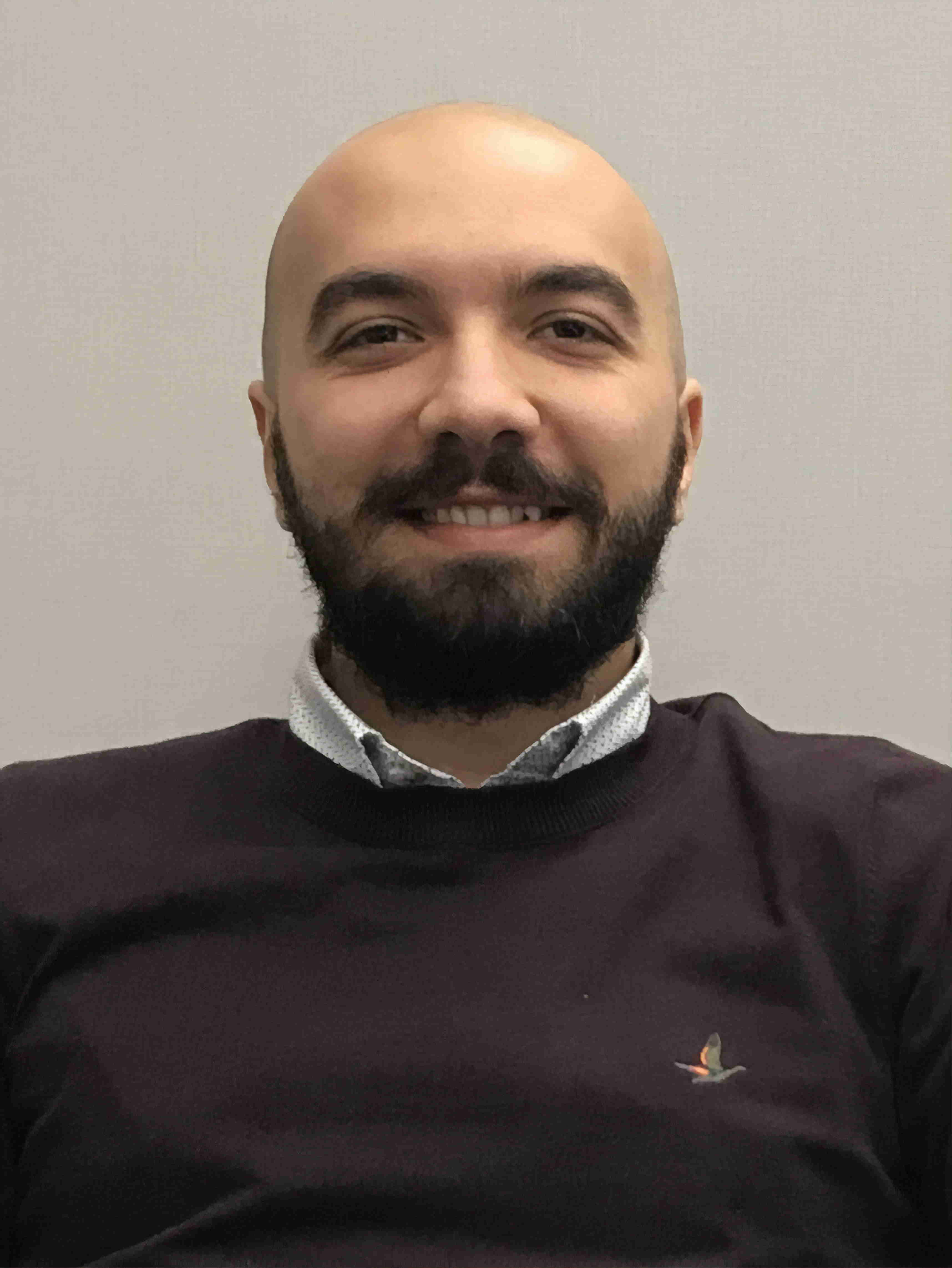}}]{Yagiz Savas} is a PhD candidate in the Department of Aerospace Engineering, University of Texas at Austin. He received his B.Sc. degree in Mechanical Engineering from Bogazici University, Turkey in 2017. His research focuses on developing socially intelligent autonomous systems that co-exist, cooperate, and compete with each other, as well as with humans, by drawing novel connections between controls, formal methods, and information theory.
\end{IEEEbiography}

\vskip -0.1\baselineskip plus -1fil

\begin{IEEEbiography}[{\includegraphics[width=1in,height=1.25in,clip,keepaspectratio]{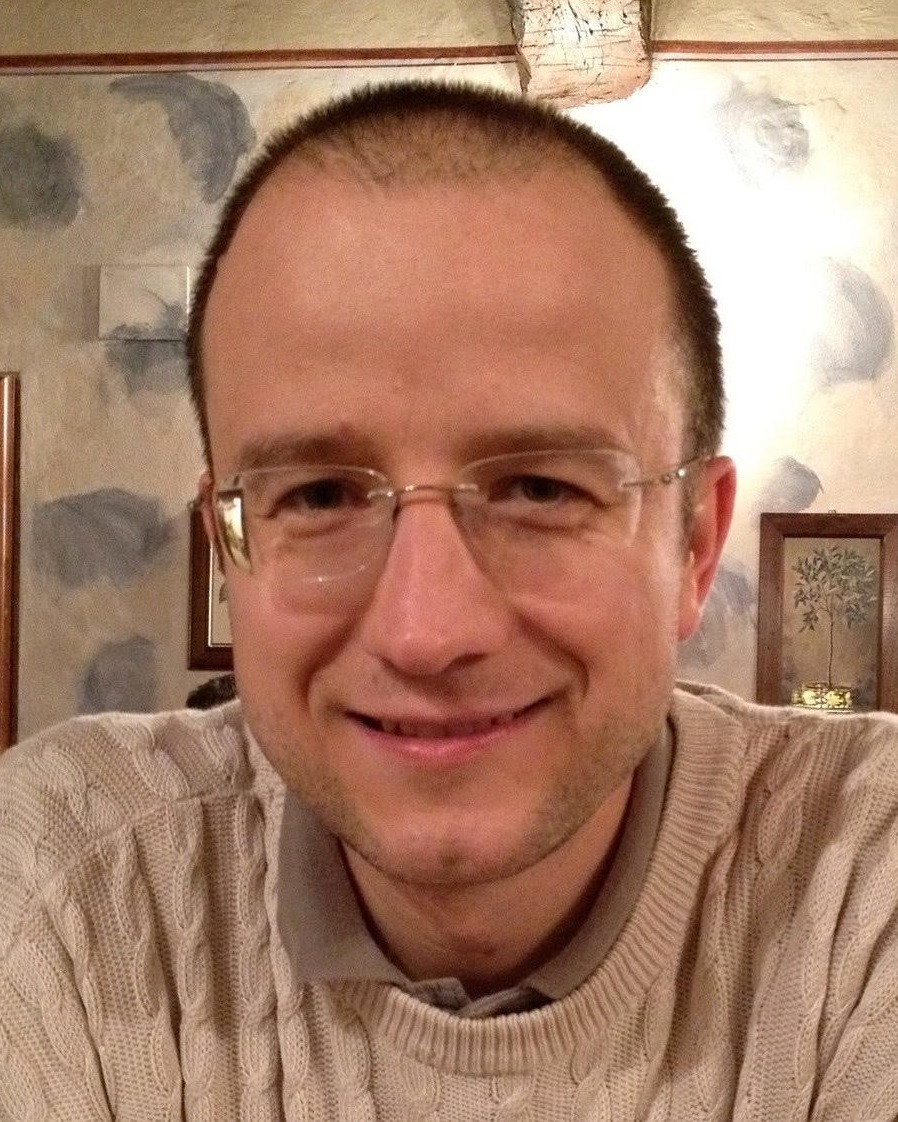}}]{Ufuk Topcu} 
is  an  associate  professor  in  the  Department of Aerospace Engineering and Engineering Mechanics and the Oden Institute at The University of  Texas  at  Austin.  He  received  his  Ph.D.  degree from  the  University  of  California  at  Berkeley  in 2008.  His  research  focuses  on  the  theoretical,  algorithmic, and computational aspects of design and verification  of  autonomous  systems  through  novel connections  between  formal  methods,  learning  theory, and controls.
\end{IEEEbiography}

\end{document}